\algnewcommand\INPUT{\item[\textbf{Input:}]}%
\algnewcommand\OUTPUT{\item[\textbf{Output:}]}%
\newtheoremstyle{mystyle}
  {5mm}
  {5mm}
  {}
  {}
  {\bfseries}
  {:}
  {3mm}
  {}
\theoremstyle{mystyle}
\newtheorem{theorem}{Theorem}
\newtheorem{lemma}{Lemma}
\newtheorem{corollary}{Corollary}
\newtheorem{definition}{Definition}
\newtheorem{proposition}{Proposition}
\newtheorem{example}{Example}
\definecolor{darkgreen}{rgb}{0.1,0.5,.5}
\definecolor{darkred}{rgb}{0.8,0,0}
\definecolor{teal}{rgb}{0.05,0.32,0.41}
\definecolor{darkblue}{rgb}{0,0.0,0.5}
\definecolor{blackgreen}{rgb}{0,0.4,0}
\definecolor{purple}{rgb}{0.5,0,0.3}
\definecolor{grey}{rgb}{0.7,0.5,0.5}
\definecolor{orange}{rgb}{0.6,0.4,0.1}
\newcommand{\bfit}[1]{ {\textit{\textbf{#1}}}}
\newcommand{\rom}[1]{\uppercase\expandafter{\romannumeral #1\relax}}
\begin{document}
%
\title{Analysis of Maximal Topologies Achieving Optimal DoF and DoF $\frac{1}{n}$ in Topological Interference Management}
%
%
\author{Jong-Yoon Yoon and Jong-Seon No
\thanks{J.-Y Yoon and J.-S. No are with the Department of Electrical and Computer
Engineering, INMC, Seoul National University, Seoul 08826, Korea. e-mail:
yjy998@ccl.snu.ac.kr, jsno@snu.ac.kr.}
}

%
%
%
\markboth{Submission for...}{Regular Paper}



\maketitle

\begin{abstract}

Topological interference management (TIM) can obtain degrees of freedom (DoF) gains with no channel state information at the transmitters (CSIT) except topological information of network in the interference channel. It was shown that TIM achieves the optimal symmetric DoF when internal conflict does not exist among messages [6]. However, it is difficult to assure whether a specific topology can achieve the optimal DoF without scrutinizing internal conflict, which requires lots of works. Also, it is hard to design a specific optimal topology directly from the conventional condition for the optimal DoF.
With these problems in mind, we propose a method to derive maximal topology directly in TIM, named as alliance construction in $K$-user interference channel. That is, it is proved that a topology is maximal if and only if it is derived from alliance construction. We translate a topology design by alliance construction in message graph into topology matrix and propose conditions for maximal topology matrix (MTM). 
Moreover, we propose a generalized alliance construction that derives a topology achieving DoF $1/n$ for $n\geq3$ by generalizing sub-alliances. A topology matrix can also be used to analyze maximality of topology with DoF $1/n$.
\end{abstract}
\begin{IEEEkeywords}
Alliance, alliance construction, degrees-of-freedom (DoF),  internal conflict, maximal topology matrix (MTM), topological interference management (TIM).
\end{IEEEkeywords}

\vspace{10pt}
\section{Introduction}
Recently, there have been many remarkable advances in the wireless networks with interference and the most astonishing breakthrough is the idea of interference alignment (IA) [1]. IA is a scheme to design signals in such a way that interference signals can be overlapped and separated from desired signal at each receiver so that each receiver can recover its desired message with gains of degrees of freedom (DoF). IA made a boom to research interference channel and many related studies based on IA have been done. The initial researches on the IA mainly depend on the perfect and instantaneous channel state information at the transmitters (CSIT) [2], [3].

However, perfect CSIT assumption is not practical and challenging, because perfect CSIT is rarely available to transmitter. Also, when the number of users is large or the channel changes rapidly, the burden of CSIT becomes large. Considering the difficulty of perfect CSIT, researchers have embarked on exploring settings with relaxed CSIT assumptions. It is shown that the setting with delayed CSIT and reconfigurable antenna can achieve the optimal symmetric DoF, which is the same as perfect CSIT assumption model [4]. If the fading channels of different users follow some structured patterns, then blind IA could improve DoF beyond the absolutely no CSIT case [5]. 

Nevertheless, most of the studies are based on the theoretical insights, which remain fragile so far to be applied to practice directly. Also, these traditional interference management schemes based on IA always consider all interference links regardless of their strength, which results in unnecessary waste on resources such as time and antennas. As the strength of interference rapidly decays with distance due to shadowing, blocking, an path loss, interference from some sources is necessarily weaker than others, which is enough to be ignored. There are more opportunities in terms of DoF and resources by utilizing the characteristic of partial connectivity in actual channels.

With the more practical assumptions of interference channel escaped from the pessimistic view and relaxation for heavy CSIT assumptions, interference management with no channel state information except the knowledge of the connectivity at the transmitters has been suggested under the name of the "topological interference management (TIM)"[6]. Jafar suggested that index coding problem could be applied to TIM problem only with linear solutions and translated the index coding problem into TIM problem in a way of analyzing DoF gains [6]. It has been shown that under the topology satisfying certain conditions, TIM can obtain gains in terms of DoF and further achieve one half DoF per user, which is optimal for an interference channel with perfect CSIT. And it can be achieved with only topological information. 

Inspired by the new framework of topological interference management that has a merit of tremendous reduction of CSIT, there have been a lot of follow-up researches in line with various assumptions such as channel, antenna, cellular network, transmit cooperation, and message passing. Fast fading channel [7] and alternating connectivity [8] are also considered and fundamental limits on multiple antennas in the TIM setting is derived [9]. Furthermore, TIM is studied in the downlink cellular network with hexagonal structure [10] and more gains of DoF is achievable with the help of message passing in uplink cellular network [11].

Unlike above follow-up studies of TIM, we further develop the research in \cite{TIM} in more practical sense rather than changing assumptions or putting some schemes which help to enhance DoF. Since the study on TIM in \cite{TIM} is based on index coding problems which mainly focus on relationship among messages,
it is hard to design a specific optimal topology directly from the optimal DoF conditions in TIM,  which are not suitable for dealing with actual network topologies. For this problem, we raise a question, "Is it possible to find and derive all topologies achieving the optimal symmetric DoF in TIM?" This is the motivation of our research. In order to avoid finding unnecessary topologies, which are sub-topologies of other topologies, we focus on finding only maximal topology with the optimal DoF $1/2$, where any interference link cannot be added without breaking the optimality. 

In this paper, we reinterpret the previous conditions of messages for the optimal DoF into more understandable conditions of messages by introducing the subset of messages with constraints, called \textit{alliance} and propose how to construct a maximal topology. With aid of the alliance construction, all maximal topologies can be derived. Meanwhile, there is still unsolved question, "How can we determine whether certain topology is maximal or not?" In this paper, we answer these questions with the alliance construction and analysis of topology in matrix perspective.

More specifically, our contributions in this paper are summarized as follows:

\begin{itemize}
    \item We propose the  alliance construction which derives maximal topology by stipulating several conditions for constructing alliances. It is proved that a topology is maximal if and only if it is derived from the alliance construction. Properties of alliance construction are introduced such as the maximum number of alliances to be constructed for the given number of messages $K$ and the partitioning method of messages into sub-alliances. 
    
    \item Message relationship based on alliance construction is translated into topology matrix in TIM. Permutation of the topology matrix is used to demonstrate the characteristics of the alliances easily in the topology matrix. The conditions for maximal topology matrix (MTM) are characterized and the discriminant of topology matrix for maximality and transformation of non-MTM into MTM are proposed.
    
    \item Alliance construction is generalized by introducing generalized sub-alliances, which extends the range of DoF values of maximal topology from alliance construction. The conditions for MTM with DoF $1/n$, the discriminant and, transformation of non-MTM into MTM with DoF $1/n$ are also proposed
\end{itemize}

The rest of this paper is organized as follows: Section \rom{2} presents the system model and definitions for TIM. The main results of this paper, that is, alliance construction for maximal topology with optimal DoF $1/2$ is proposed in Section \rom{3}. The topology analysis in matrix perspective is discussed in Section \rom{4}. The generalized alliance construction and the  maximal topology with DoF $1/n$ are proposed in Section \rom{5} and Section \rom{6} concludes the paper.


\vspace{10pt}
\section{System Model and Preliminaries} \label{sec_preliminaries}

Throughout the paper, some notations are defined as follows. $\mathrm{\mathit{A}}$, $\bfit{A}$, and $\mathcal{A}$ represent a variable, a matrix, and a set, respectively. $\left| \mathcal{A} \right|$ denotes the cardinality of the set $\mathcal{A}$ and $a_{ij}$ is the $(i,j)$-th entry of the matrix $\bfit{A}$. Let $\mathcal{K} =\{ 1,2,\cdots,K\}.$  $\mathbf{1}$ and $\mathbf{0}$ represent all-one and all-zero vectors, respectively.

\subsection{Channel Model}

We consider the TIM setting \cite{TIM} in a partially connected $K$-user interference channel, where $K$ transmitters want to send $K$ independent messages to $K$ receivers equipped with a single antenna. 
Then, the received signal at the receiver $i$ through partially connected channel at time instant $t$ is represented as
\begin{equation}\label{equ_rec_1st}
\textit{Y}_{i}(t)=\sum_{j\in \mathcal{S}_i}{\textit{h}_{ij}(t)\textit{X}_{j}(t)}+\textit{Z}_{i}(t), \text{ } i\in \mathcal{K},
\vspace{10pt}
\end{equation}

\noindent where $X_j(t)$ is the transmitted signal with the average power constraint $\mathbb{E}[\textit{X}^2_{j}(t)]\leq P$, ${Z_{i}(t)}$ is the Gaussian noise with zero-mean and unit variance, ${h_{ij}(t)}$ is the channel coefficient between transmitter ${j}$ and receiver ${i}$, and $\mathcal{S}_{i}$ represents a set of the indices of transmitters that are connected to receiver $i$. The network topology is denoted by $\mathcal{T}$, which is directed bipartite graph with transmitters and receivers at each side, and with edges from transmitters to receivers only when they are connected.

Similar to TIM researches in \cite{TIM}-\cite{TIM_MP}, the following channel state information (CSI) is assumed:
\begin{enumerate}[label=(\roman*)]
    \item The channel coefficients are assumed to be fixed throughout the duration of communication such that $h_{ij}(t)=h_{ij}$ and thus the network topology $\mathcal{T}$ is also assumed to be fixed.
    
    \item The channel coefficients $h_{ij}$ for all $i, j$ are unavailable at the transmitters, but the network topology $\mathcal{T}$ is known to all transmitters and receivers.
    
    \item The channel state information at the receiver (CSIR) includes only the information of the desired channel coefficient $h_{ii}$ at each receiver.
\end{enumerate}

\subsection{Problem Statement}

In this paper, we use some definitions for the TIM problem in \cite{TIM}. For the theorem for optimal DoF in TIM, the definitions of alignment graph, conflict graph, alignment set, and internal conflict are given as follows.

\begin{definition}[Alignment graph] It is called an alignment graph if messages $W_{i}$ and $W_{j}$ in the graph are connected with solid black edge whenever the source(s) of both these messages are heard by a receiver that desires message $W_{k}$, $k\neq i$ and $k\neq j$.
\end{definition}

\begin{definition}[Conflict graph] It is called a conflict graph if each message $W_{i}$ is connected by a dashed red edge to all other messages whose sources are heard by a destination that desires message $W_{i}$.
\end{definition}

\begin{definition}[Alignment set] Each connected component of an alignment graph is called an alignment set.
\end{definition}

\begin{definition}[Internal conflict] If two messages that belong to the same alignment set have a conflict edge between them, it is called an internal conflict.
\end{definition}
 
 \begin{definition}[Maximal topology]
A topology is maximal if there is no internal conflict in the alignment set(s) and any interference link cannot be added without occurring internal conflict. The maximal topology for $K$-user interference channel is denoted as $\mathcal{T}_{K}^{\mathrm{M}}$.
\end{definition}

\begin{definition}[Topology matrix]
A topology matrix for $K$-user interference channel, $\bfit{T}_{K}=[t_{ij}]_{K\times K}$ is defined as $t_{ij}=1,$ if there is a link between transmitter $j$ and receiver $i$, and $t_{ij}=0,$ otherwise.
\end{definition}

In this paper, we use the strategy for DoF using linear beamforming scheme in \cite{TIM} and set the DoF as our main performance criterion.

\begin{definition} [Degrees of freedom]
The DoF  or multiplexing gain $d$ of the channel is defined as 
\begin{equation*}
d=\lim_{SNR \to \infty}{\frac{R(SNR)}{log (SNR)} },
\end{equation*}
where $R(SNR)$ is an achievable rate at the $SNR$.
\end{definition}

\section{Alliance Construction and Maximal Topology}

In this section, we propose our main results, alliance construction and maximal topology set derived from alliance construction in TIM. The optimal DoF conditions in TIM are already proposed in [6]. However, since they are focused on the message relationship rather than transceivers or topology of interference channel, it is not easy to check whether a given interference channel achieves the optimal symmetric DoF or not without drawing alignment and conflict graphs and investigating the existence of internal conflict, which requires lots of works. In other words, Theorem 1 in [6] does not directly produce a topology that achieves the optimal DoF. This is the beginning of our study and one of our main contributions is to derive all topologies achieving optimal symmetric DoF by combining and reinterpreting the alignment set and the internal conflict into a single concept, called \textit{alliance}. To this end, we define a maximal topology in the previous section. We only consider maximal topology because any non-maximal topology is sub-topology of maximal topology. Once alliances are constructed for all messages, the relationship among messages is determined naturally, which derives a maximal topology.

In Subsection \rom{2}.A, we propose alliance construction and in Subsection \rom{2}.B, a linear beamforming scheme for messages in each alliance is proposed, which shows how it achieves the optimal symmetric DoF in TIM. In Subsection \rom{2}.C, we derive the maximum number of alliances for the given number of messages $K$ and the partitioning method of messages into sub-alliances. In Subsection \rom{2}.D, the discriminant of maximal topology is proposed using alliance construction and the transformation of non-maximal topology into maximal one is also proposed.

\subsection{Alliance Construction}

In the alignment graph and conflict graph, each message implicitly represents a pair of transmitter and receiver which sends and wants it, where DoF is analyzed based on the relationship of messages rather than the network topology. In general, it is difficult to derive the optimal topology in terms of DoF from the message-oriented graph. Theorem 4 in \cite{TIM} tells that the topology can achieve the optimal symmetric DoF in TIM if and only if  its alignment sets do not have any internal conflict. Since we only focus on the maximal topology, it is better to consider topology in terms of the alignment set with no internal conflict rather than messages themselves.  However, it is not enough to derive maximal topology because maximal topology has not only the optimal DoF but also maximality of interference links with maintaining the optimality. Thus, the alignment set with constraints is needed, that is, deconflicting of messages within the alignment sets and satisfying maximality of topology. In order to propose how to design a maximal topology in TIM, we introduce alliance of messages as follows.


\begin{definition}[Alliance]
An alliance is a partition set of whole messages in the message graph of TIM satisfying the following conditions.
\begin{enumerate}[label=(\roman*)]
    \item (Deconflict) All messages in the set deconflict each other.
    \item (Cooperative conflict) All messages in the set conflict with each message in a subset of other messages.
\end{enumerate}
\end{definition}

According to the above definition, there are two conditions for messages in alliance. The first condition is the deconflict among messages in alliance, which prevents internal conflict in alignment sets. The second condition means that if a single message $W_i$ in alliance conflicts with message $W_k$ in other alliances, then the other messages in the same alliance must conflict with $W_k$. Thus, all messages in alliance are fully connected by alignment edges. The cooperative conflict makes the topology include as many interference links as possible without changing message relationship, that is, it is necessary condition for a maximal topology.

\begin{lemma}[Cooperative conflict]
	If a topology is maximal, all messages in alignment set conflict with each message in a subset of other messages.
\end{lemma}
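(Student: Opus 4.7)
The plan is to proceed by contradiction. Suppose the topology is maximal yet some alignment set $\mathcal{A}$ violates the claimed property; then I can pick messages $W_i, W_j \in \mathcal{A}$ and a message $W_k \notin \mathcal{A}$ such that a conflict edge joins $W_i$ and $W_k$ while no conflict edge joins $W_j$ and $W_k$. The objective is to exhibit a single new interference link whose insertion creates no internal conflict anywhere in the network, which contradicts the maximality of the topology.

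Because $W_i$ and $W_j$ lie in the same connected component of the alignment graph, there is a chain of directly aligned messages between them, and by choosing a closest violating pair along that chain I may reduce to the case that $W_i$ and $W_j$ are directly aligned through some third-party receiver $m$. Receiver $m$ then hears transmitters $i$ and $j$ but cannot hear transmitter $k$, for otherwise $W_k$ would align with both $W_i$ and $W_j$ through $m$ and would have to lie in $\mathcal{A}$. Moreover, since the maximal topology has no internal conflict inside $\mathcal{A}$, receiver $i$ does not hear transmitter $j$ and vice versa.

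I then split on the direction responsible for the conflict between $W_i$ and $W_k$. Suppose first that receiver $k$ hears transmitter $i$. I propose to add the interference link from transmitter $j$ to receiver $k$. The only newly created conflict edge is between $W_j$ and $W_k$, and the only newly created alignment edges go from $W_j$ to each message $W_\ell$ whose transmitter was already heard by receiver $k$. Because receiver $k$ already hears transmitter $i$, each such $W_\ell$ is aligned with $W_i$ through receiver $k$ and therefore already sits in $\mathcal{A}$. Hence no alignment set is merged, $\mathcal{A}$ and the alignment set containing $W_k$ remain disjoint, and the new conflict between $W_j$ and $W_k$ is external, which contradicts maximality.

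The remaining case, in which receiver $i$ hears transmitter $k$, I would treat by a mirrored construction that adds the link from transmitter $k$ to receiver $j$. The dual observation is that receiver $i$ forces $W_k$ to align with every $W_\ell$ (with $\ell \neq k$) whose transmitter is heard by receiver $i$, so the alignment set containing $W_k$ already absorbs all those messages. I expect the main obstacle of the proof to lie in this direction: the new alignment edges from $W_k$ to each message $W_\ell$ heard by receiver $j$ could potentially merge the alignment set of $W_k$ with a third distinct alignment set, and I must then verify that no pre-existing external conflict is dragged inside the merged component as an internal one. Resolving this cleanly will likely require a finer subcase analysis that exploits structural constraints forced by the original maximality on the interferer sets at receivers $i$, $j$, and $k$, such as the disjointness of certain index sets that would otherwise produce an internal conflict in $\mathcal{A}$ or in the alignment set of $W_k$.
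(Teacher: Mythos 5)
Your overall strategy---contradiction by exhibiting a single addable interference link---matches the paper's, and your Case 1 (receiver $k$ hears transmitter $i$; add the link from transmitter $j$ to receiver $k$) is correct and in fact more careful than the paper's own argument: you rightly note that the only new alignment edges attach $W_j$ to messages that are already aligned with $W_i$ through receiver $k$ and hence already lie in $\mathcal{A}$, so no alignment sets merge and the single new conflict edge $W_j$--$W_k$ remains external. The reduction to an adjacent violating pair along an alignment path is also sound.

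The genuine gap is Case 2, which you acknowledge but do not close. When the existing conflict is realized by receiver $i$ hearing transmitter $k$, the mirrored link (transmitter $k$ to receiver $j$) aligns $W_k$ with every current interferer of receiver $j$; those interferers do form a clique through receiver $j$ and hence sit in a single alignment set $\mathcal{C}$, but if $\mathcal{C}$ is distinct from the alignment set of $W_k$ and a conflict edge already runs between the two sets, the merge produces an internal conflict and the link is not addable. The alternative link (transmitter $j$ to receiver $k$) has the symmetric defect of possibly merging $\mathcal{A}$ with the alignment set of receiver $k$'s interferers. You have not shown that in a topology with no internal conflict at least one of these additions (or some third link) is always safe, so the contradiction is not established in exactly the configuration the lemma must rule out; "a finer subcase analysis will likely be required" is a plan, not a proof. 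For comparison, the paper never descends to the link level: it argues on the message graph that the remaining messages of the alignment set can be joined to $W_k$ by conflict edges "without changing the message relationship," treating insertion of a conflict edge as a primitive that leaves alignment sets untouched, and concludes non-maximality in one stroke. Your finer-grained analysis exposes precisely the side effect (forced new alignment edges when a conflict edge is realized by a physical link) that this abstraction suppresses, which is valuable, but as written your argument is incomplete until Case 2 is resolved.
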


\begin{proof}
Suppose that the messages in an alignment set do not satisfy cooperative conflict, that is, messages in a subset of the alignment set conflicts with $W_k$. Consider the possible  relationship of the remaining messages in the alignment set and $W_k$ without occurring internal conflict. Due to internal conflict, the remaining messages and $W_k$ can be connected not with  alignment edges but with conflict edges. Since the messages in the subset and the remaining messages are already in the same alignment set, connecting the remaining messages and $W_k$ with conflict edges does not change the message relationship. Thus, the topology is not maximal and we prove it.
\end{proof}
The cooperative conflict of alliance is summarized as in the following sentence. "The enemy of my friends (messages in the alliance) is also my enemy."

\begin{figure*}[t]
\centering
\subfigure[Alliance]{\includegraphics[width=0.43\linewidth]{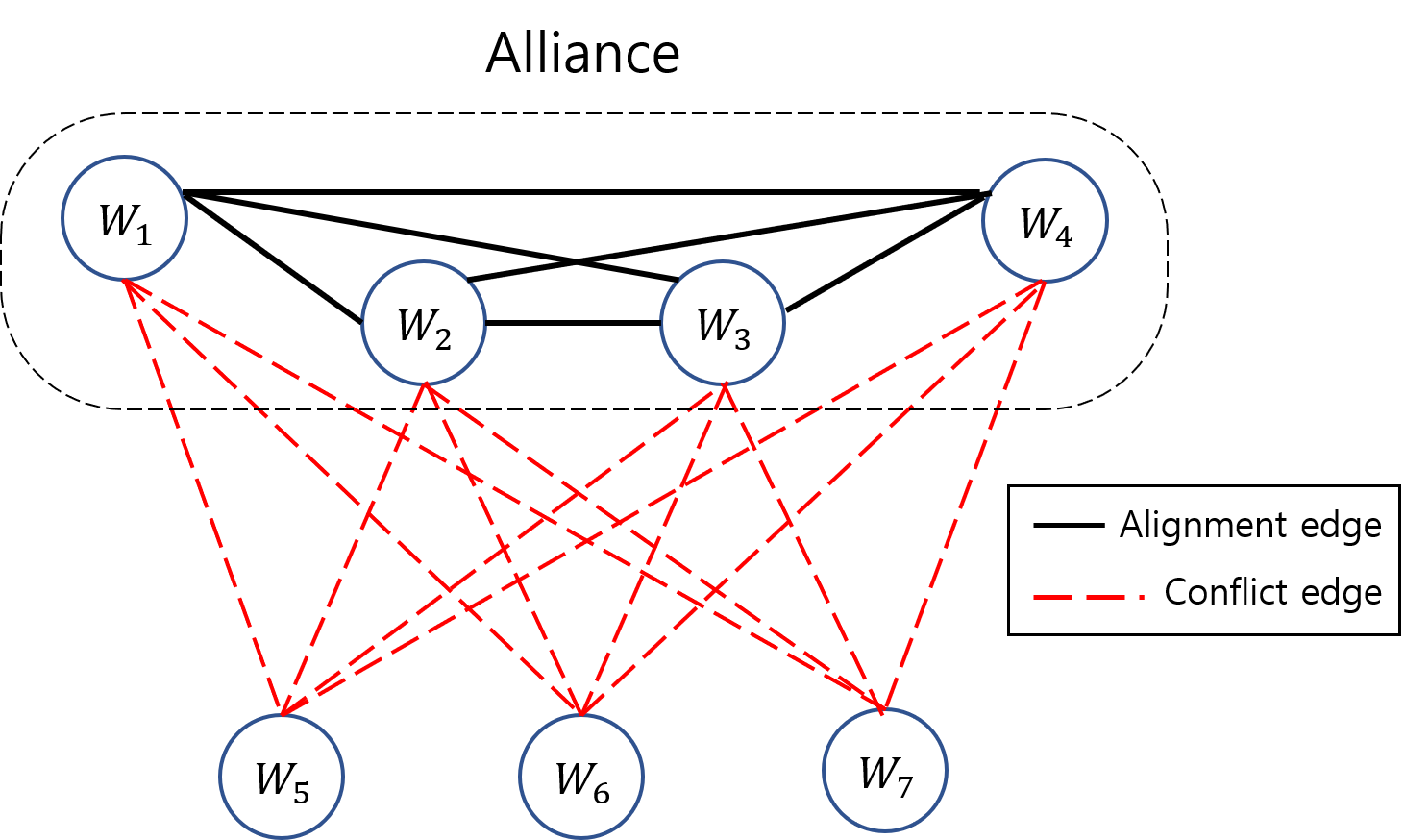}}
\hspace{25pt}
\subfigure[Alignment set]{\includegraphics[width=0.35\linewidth]{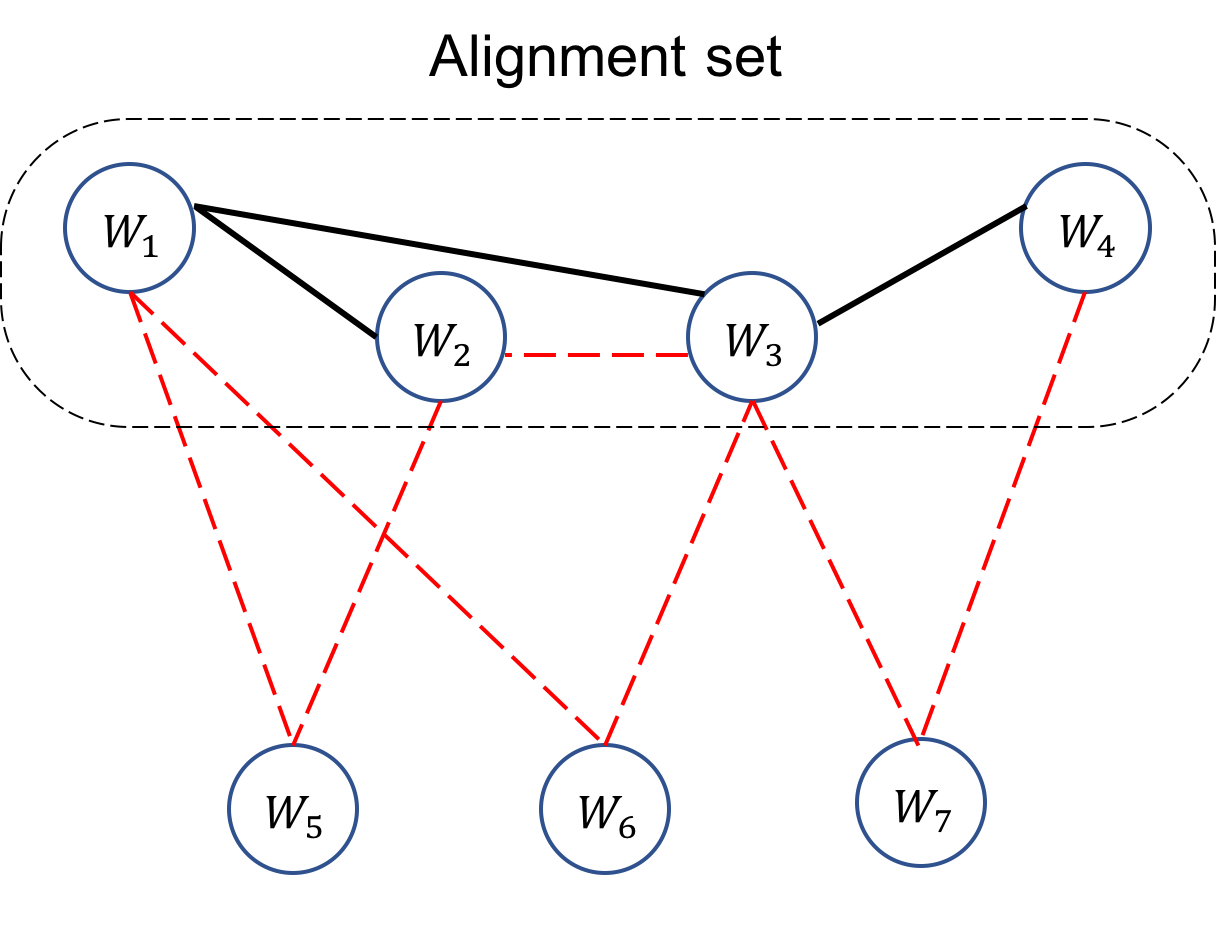}}
\caption{The differences between alliance and alignment set.}
\label{MTMF}
\end{figure*}

We describe difference between alliance and alignment set in Fig. 1. The alignment set can have internal conflict in the set and all messages in the set do not have to conflict with each of other messages. On the contrary, messages in alliance follow the deconflict of messages in alliance, which prevents internal conflict, and the cooperative conflict, which follows conditions for maximality of topology. That is, alliance is an alignment set that satisfies the cooperative conflict with no internal conflict. It is more suitable to analyze a maximal topology by the alliance than the alignment set. However, the alliance itself is not enough for deriving network topology. The alliance is just a subset of messages, which determines relationship of messages in the alliance and its conflicting messages. Now, it is needed to establish inter-alliance relationship in order to complete relationship of whole messages in the message graph, which is called \textit{construction of alliances}. Here, we need some definitions.

\begin{definition}[Hostility and mutual hostility]
Let $\mathcal{A}_i$ represent the $i$th alliance in the message graph. $\mathcal{A}_i$ is said to be hostile to $\mathcal{A}_j$ if and only if all messages in $\mathcal{A}_i$ conflict with all messages in $\mathcal{A}_j$, denoted by
   \begin{equation}
   		\mathcal{A}_i \rightarrow \mathcal{A}_j.
   \end{equation}

Also, $\mathcal{A}_i$ and  $\mathcal{A}_j$ are said to be mutually hostile if and only if all messages in $\mathcal{A}_i$ conflict with all messages in $\mathcal{A}_j$ and vice versa, denoted by
   \begin{equation}
   		\mathcal{A}_i  \Longleftrightarrow \mathcal{A}_j.
   \end{equation}
\end{definition}

The possible number of alliances is limited to two if we assume mutual hostility of all alliances as in the following lemma.

\begin{lemma}[Mutual hostility]
	If all alliances are mutually hostile, there exist only two alliances.
\end{lemma}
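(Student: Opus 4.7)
The plan is to argue by contradiction. Suppose there exist three pairwise mutually hostile alliances $\mathcal{A}_1, \mathcal{A}_2, \mathcal{A}_3$, and choose representatives $W_a \in \mathcal{A}_1$, $W_b \in \mathcal{A}_2$, and $W_c \in \mathcal{A}_3$. The target is to force the receiver that desires $W_c$ to hear the transmitters of both $W_a$ and $W_b$; that will create an alignment edge between $W_a$ and $W_b$ and collide with the fact that distinct alliances sit in distinct alignment sets.

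First, I would unpack mutual hostility through Definition 2 of the conflict graph. The relation $\mathcal{A}_1 \Longleftrightarrow \mathcal{A}_3$ says in particular that $W_c$ conflicts with $W_a$, so the destination that desires $W_c$ hears the source of $W_a$; similarly, $\mathcal{A}_2 \Longleftrightarrow \mathcal{A}_3$ forces the destination of $W_c$ to hear the source of $W_b$. Since $a, b, c$ are pairwise distinct, Definition 1 immediately produces an alignment edge between $W_a$ and $W_b$.

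Next, I would invoke the structural identification emphasized immediately after Definition 7: an alliance is an alignment set that satisfies the cooperative conflict with no internal conflict. Distinct alliances therefore occupy distinct connected components of the alignment graph, so no alignment edge can link $W_a \in \mathcal{A}_1$ to $W_b \in \mathcal{A}_2$. This contradicts the alignment edge produced in the previous step, proving that three mutually hostile alliances cannot coexist. Since mutual hostility is a binary relation between two alliances, the bound of exactly two is both an upper limit and attained.

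The main thing I expect to have to handle with care is bookkeeping the orientation of conflict-graph edges, which Definition 2 phrases asymmetrically (destination of $W_i$ hearing source of $W_j$). The ``and vice versa'' clause in Definition 8 supplies exactly the direction needed so that the receiver of $W_c$ sees both ``external'' sources simultaneously; I would state this explicitly at the start so that the alignment-edge conclusion is unambiguous. Beyond that, the argument is a direct application of the three definitions and the identification of alliances with alignment sets.
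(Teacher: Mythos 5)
Your proof is correct and follows essentially the same route as the paper: assume three pairwise mutually hostile alliances, observe that a receiver in $\mathcal{A}_3$ hears sources from both $\mathcal{A}_1$ and $\mathcal{A}_2$, and conclude that the resulting alignment connection between $\mathcal{A}_1$ and $\mathcal{A}_2$ is incompatible with their being distinct (mutually hostile) alliances. Your version merely makes explicit, via Definitions 1 and 2, the alignment edge that the paper's proof invokes implicitly when it says $\mathcal{A}_1$ and $\mathcal{A}_2$ ``should be combined by cooperative conflict but cannot be combined due to hostility between them.''
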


\begin{proof}
Suppose that there are three alliances $\mathcal{A}_1$, $\mathcal{A}_2$, and $\mathcal{A}_3$ with mutual hostility. Due to the mutual hostility of all alliances, $\mathcal{A}_1$ and $\mathcal{A}_2$ are hostile to $\mathcal{A}_3$. This is contradiction that $\mathcal{A}_1$ and $\mathcal{A}_2$  should be combined into a single alliance by cooperative conflict, but cannot be combined due to hostility between them. Similarly, more than three alliances cannot exist with the mutual hostility of all alliances. Therefore, there exist only two alliances if all alliances are mutually hostile.
\end{proof}

The mutual hostility can be related to maximality of topology.

\begin{theorem}[2-alliance construction]
Suppose that there exist two alliances $\mathcal{A}_1$ and $\mathcal{A}_2$ in the message graph of $K$-user interference channel.  A topology derived from 2-alliance construction is maximal if and only if $\mathcal{A}_1$ and $\mathcal{A}_2$ are mutually hostile
\begin{equation}
\mathcal{A}_1 \Longleftrightarrow \mathcal{A}_2.
\end{equation}
 The achievable symmetric DoF of the topology from 2-alliance is optimal as
 \begin{equation}
d_{sym} = \frac{1}{2}.
\end{equation}

\end{theorem}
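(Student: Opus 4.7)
The plan is to split the theorem into two assertions and handle them separately: (i) the iff characterization of maximality in terms of mutual hostility of the two alliances, and (ii) the achievability of the optimal symmetric DoF $1/2$. Part (i) will be argued in both directions using the alliance definition (deconflict and cooperative conflict), Lemma 2, and the no-internal-conflict characterization of the optimal DoF from [6]; part (ii) will follow from a standard two-slot linear beamforming argument together with the matching outer bound from [6].

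For the forward direction of (i), assume $\mathcal{A}_1 \Longleftrightarrow \mathcal{A}_2$. The deconflict clause in the alliance definition immediately precludes internal conflict inside either alliance, so the topology already realizes the optimal DoF conditions. To establish maximality, I would classify every candidate new interference link by whether its two endpoints lie in the same alliance or in different alliances. Cross-alliance pairs already carry conflict edges by mutual hostility, so no new link can be inserted there without duplicating an existing one; any within-alliance link, on the other hand, introduces a conflict edge between two messages lying in a common alignment set, which is precisely an internal conflict. Hence no enlargement is possible and the topology is maximal.

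For the reverse direction of (i), I would argue by contrapositive: suppose the two alliances are not mutually hostile and exhibit a strictly larger topology that still has no internal conflict, contradicting maximality. Non-hostility provides some $W_i \in \mathcal{A}_1$ and $W_j \in \mathcal{A}_2$ with no conflict edge. By Lemma 2 applied to each alliance, cooperative conflict forces the outside conflict sets of the messages within each alliance to coincide, so one can add every missing cross-alliance interference link in one stroke while preserving the cooperative-conflict structure. Since each newly added edge straddles the two alignment sets, it is a conflict edge between different alignment sets and thus not internal. The main obstacle of the proof lies here: I must verify that this augmentation does not silently merge or reshape the alignment sets (which would retroactively turn some cross-alliance conflicts into internal conflicts). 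This reduces to checking, from the definition of the alignment graph, that the added links only create alignment edges among messages already inside the same alliance, since the within-alliance hearing pattern is untouched; the alignment sets therefore remain $\mathcal{A}_1$ and $\mathcal{A}_2$, the augmented topology is internal-conflict-free, and maximality fails, giving the desired contradiction.

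For part (ii), I would exhibit the standard TIM beamforming scheme over $T=2$ time slots: pick two linearly independent $2\times 1$ vectors $v_1,v_2$ and have every transmitter whose message lies in $\mathcal{A}_\ell$ send along $v_\ell$. At any receiver $k$ with $W_k \in \mathcal{A}_1$, the desired signal lies along $v_1$ while all interference, arriving only from transmitters in $\mathcal{A}_2$ by the deconflict property inside $\mathcal{A}_1$, is aligned along $v_2$; the two directions are linearly independent, so one symbol per two slots is recovered, yielding $d_{\text{sym}} \geq 1/2$. The matching converse $d_{\text{sym}} \leq 1/2$ follows from the TIM outer bound in [6] applied to any two conflicting messages within either alliance. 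Together these establish $d_{\text{sym}} = 1/2$, completing the proof.
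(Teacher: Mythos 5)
Your proof is correct and follows essentially the same route as the paper's: the necessity direction adds the missing cross-alliance conflict edges and observes no internal conflict arises, and the sufficiency direction notes that any within-alliance edge creates an internal conflict while all cross-alliance edges are already present under $\mathcal{A}_1 \Longleftrightarrow \mathcal{A}_2$. You are in fact somewhat more thorough than the paper, which glosses over the check that the augmentation does not reshape the alignment sets and defers the DoF achievability and converse to [6] and its Subsection III.B; note only that the cooperative-conflict property you invoke is Lemma 1 in the paper, not Lemma 2.
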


\begin{proof}
(Necessary) Assume that $\mathcal{A}_1$ and $\mathcal{A}_2$ are not mutually hostile, that is, all messages in $\mathcal{A}_1$ conflict with some messages in $\mathcal{A}_2$ or all messages in $\mathcal{A}_2$ conflict with some messages in $\mathcal{A}_1$. Then, we can add conflict edges between messages in $\mathcal{A}_1$ and $\mathcal{A}_2$ without occurring the internal conflict and thus it is not maximal. (Sufficient) From Lemma 2, there are only two alliances if all alliances are mutually hostile. Then all messages in $\mathcal{A}_1$ are fully conflict with all messages in $\mathcal{A}_2$ and vice versa. Also, it is not possible to add any conflict edges among messages in the same alliance due to the deconflict of messages in alliance. Thus the topology is maximal.
\end{proof}

We omit the proof of DoF achievability because it has already shown in [6] for the alignment graph and conflict graph and the achievable scheme is proposed in the following subsection.

\begin{figure*}[t]
\centering
\subfigure[A maximal topology]{\includegraphics[width=0.21\linewidth]{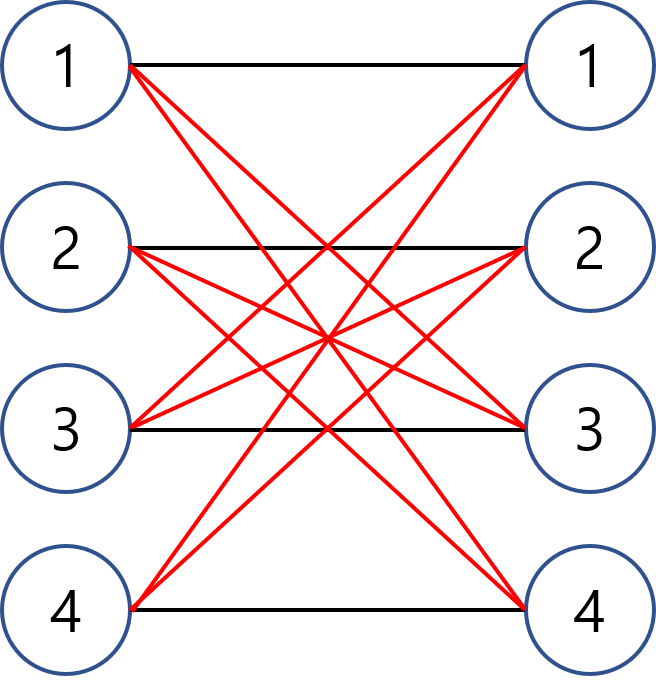}}\label{MT41}
\hspace{10pt}
\subfigure[Alignment-conflict graph of (a)]{\includegraphics[width=0.23\linewidth]{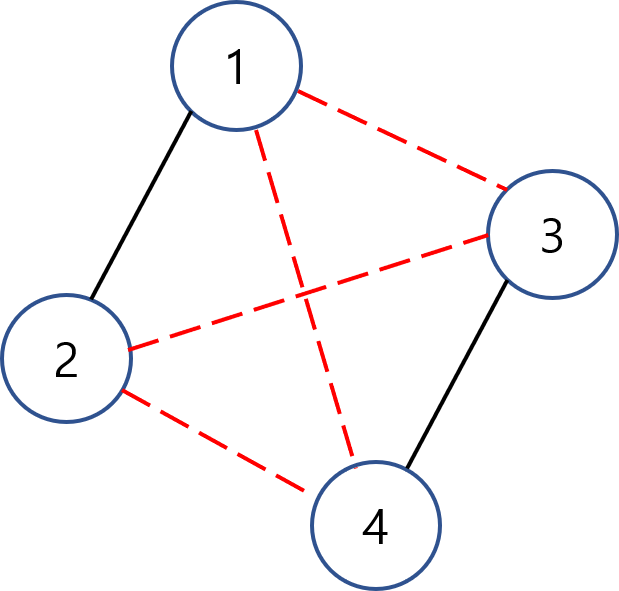}}\label{MT4_2}
\hspace{10pt}
\subfigure[Another maximal topology]{\includegraphics[width=0.21\linewidth]{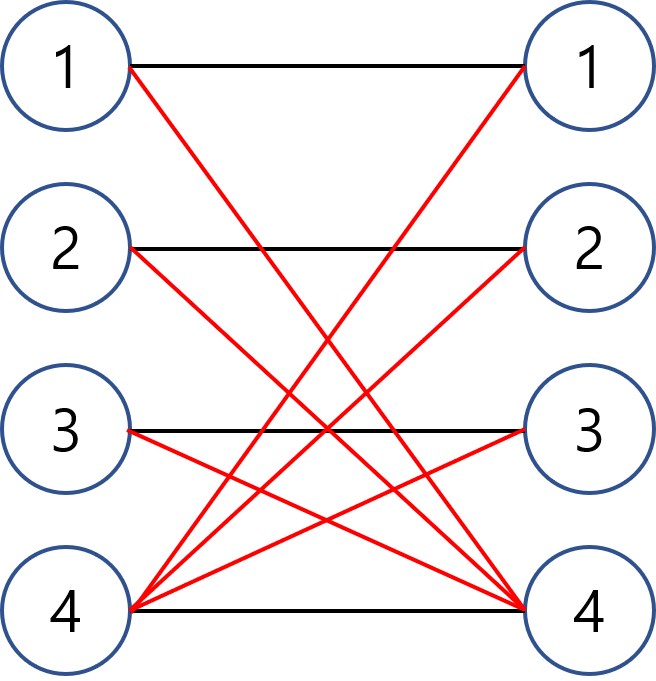}}\label{MT4_3}
\hspace{10pt}
\subfigure[Alignment-conflict graph of (c)]{\includegraphics[width=0.23\linewidth]{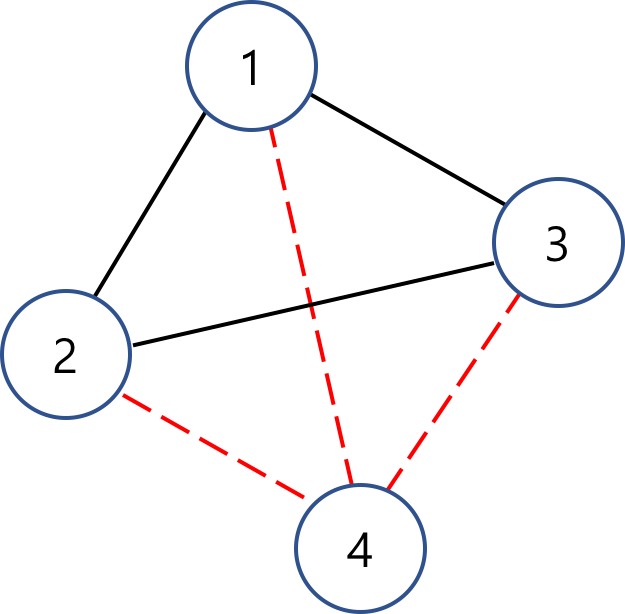}}\label{MT4_4}
\caption{Two maximal topologies for 4-user interference channel and related graphs.}
\label{MT4}
\end{figure*}

\begin{example}
In Fig. \ref{MT4}, there are two maximal topologies for 4-user interference channel. They are maximal topologies derived from 2-alliance construction. The solid black edges indicate that two messages are connected as alignment edge and belong to the same alliance (alignment set). The dashed red edges show conflict between them. In Fig. 2 (a), there are two alliances, $\mathcal{A}_1=\{W_1, W_2\}$ and $\mathcal{A}_2=\{W_3, W_4\}$. The interference channel in Fig. 2 (c) is another maximal topology, where $\mathcal{A}_1=\{W_1, W_2,W_3\}$ and $\mathcal{A}_2=\{W_4\}$.
\end{example}

The two topologies in Example 1 are all possible maximal topologies derived from 2-alliance construction for 4-user interference channel if we do not take into account the indices of messages. However, there are other maximal topologies which are generated from other than 2-alliance construction. The natural question is, "Is there other way to satisfy maximality with giving up mutual hostility of all alliances?" The alliance construction can be generalized by setting hostility of alliances in a more general way. The key idea is to construct alliances, where each subset of messages in the alliance are interfered separately from all messages of each alliance. To this end, some definitions are needed as follows.

\begin{definition}[Sub-alliance] Let $\mathcal{A}_i$ and $\mathcal{A}_j$ be alliances. $\mathcal{A}_{i,j}$ and $\mathcal{A}_{i,k}$ are partition sets of $\mathcal{A}_i$, $\mathcal{A}_{i,j}\cap \mathcal{A}_{i,k}=\emptyset$ for distinct $i$, $j$, and $k$ and $\bigcup_{j\neq i}{\mathcal{A}_{i,j}}=\mathcal{A}_i$. Then $\mathcal{A}_{i,j}$ represents sub-alliance of $\mathcal{A}_i$, where all messages in $\mathcal{A}_j$ conflict with each message in $\mathcal{A}_{i,j}$.
\end{definition}

\begin{definition}[Partial hostility] $\mathcal{A}_j$ is said to be partially hostile to $\mathcal{A}_i$ if all messages in $\mathcal{A}_j$ only conflict with each message in $\mathcal{A}_{i,j}$ of $\mathcal{A}_{i}$, denoted as
   \begin{equation}
   		\mathcal{A}_j \rightharpoonup \mathcal{A}_i
   \end{equation}
\begin{equation}
   		\text{identically}\,\mathcal{A}_j \rightarrow \mathcal{A}_{i,j}.
   \end{equation}
\end{definition}

\begin{definition}[Mutually partial hostility] $\mathcal{A}_i$ and $\mathcal{A}_j$ are said to be mutually partial hostile if all messages in $\mathcal{A}_i$ only conflict with each message in $\mathcal{A}_{j,i}$ of $\mathcal{A}_{j}$ and all messages in $\mathcal{A}_j$ only conflict with each message in $\mathcal{A}_{i,j}$ of $\mathcal{A}_{i}$, where at least one of  $\mathcal{A}_{i,j}$ and $\mathcal{A}_{j,i}$ are non empty sets, denoted by
   \begin{equation}
   		\mathcal{A}_i \rightleftharpoons \mathcal{A}_j
   \end{equation}	
\begin{equation}
   		\text{identically}\,\mathcal{A}_i \rightarrow\mathcal{A}_{j,i}\,\text{ and/or }\,\mathcal{A}_{j} \rightarrow \mathcal{A}_{i,j}.
   \end{equation}
\end{definition}

Even though one of sub-alliances $\mathcal{A}_{i,j}$ and $\mathcal{A}_{j,i}$ is an empty set, $\mathcal{A}_i$ and $\mathcal{A}_j$ are also said to be mutually partial hostile in this paper. That is, $\mathcal{A}_i \rightleftharpoons \mathcal{A}_j$ includes three cases that only $\mathcal{A}_i$ is partially hostile to $\mathcal{A}_j$ or only $\mathcal{A}_j$ is partially hostile to $\mathcal{A}_j$ or both $\mathcal{A}_i$ and $\mathcal{A}_j$ are partially hostile to each other. If there is no hostility between any two alliances, two alliances are combined by adding conflict edges without internal conflict and $N$ alliances are reduced to $N-1$ alliances.

\begin{lemma}\label{aml} 
If sub-alliances $\mathcal{A}_{i,j}$ and $\mathcal{A}_{j,i}$ are all empty sets,  the topology is not maximal.
\end{lemma}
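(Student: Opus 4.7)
The strategy is to establish non-maximality constructively, by producing a strict augmentation of the given topology that still supports a valid alliance construction and therefore still achieves the optimal symmetric DoF. The hypothesis $\mathcal{A}_{i,j} = \mathcal{A}_{j,i} = \emptyset$ says precisely that no conflict edge runs between any message of $\mathcal{A}_i$ and any message of $\mathcal{A}_j$; equivalently, in the topology matrix the cross block indexed by these two alliances is identically zero, which leaves room to flip several zeros to ones without disturbing the rest of the structure.

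I would build the augmentation by promoting $\mathcal{A}_i$ and $\mathcal{A}_j$ to mutually hostile alliances: add every interference link between a transmitter of $\mathcal{A}_i$ and a receiver of $\mathcal{A}_j$, and every link in the opposite direction, so that in the modified alliance structure $\mathcal{A}_{i,j} = \mathcal{A}_i$ and $\mathcal{A}_{j,i} = \mathcal{A}_j$, while every other alliance and sub-alliance is left untouched. Three verifications then close the argument: deconflict inside each alliance is preserved because the new edges lie strictly between $\mathcal{A}_i$ and $\mathcal{A}_j$; cooperative conflict is preserved because every member of $\mathcal{A}_i$ picks up exactly the same new conflict partners, namely all of $\mathcal{A}_j$, and symmetrically for $\mathcal{A}_j$; and the added conflict edges sit between two distinct alliances, so they cannot themselves constitute an internal conflict. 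Since the new topology is produced from a legitimate alliance construction and strictly contains the original, the original cannot have been maximal.

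The main obstacle I anticipate is confirming that the added interference links do not indirectly spawn alignment edges that merge formerly distinct alliances and thereby convert a pre-existing cross-alliance conflict edge into an internal one. When a receiver $p \in \mathcal{A}_i$ starts to hear a transmitter $q \in \mathcal{A}_j$, the message $W_q$ becomes a co-interferer at $p$ with whatever was already interfering there, and new alignment edges attach $W_q$ to each such interferer. I plan to dispatch this by invoking the cooperative-conflict uniformity of $\mathcal{A}_i$: the interferer set at every receiver in $\mathcal{A}_i$ is the same and decomposes into sub-alliances of the form $\mathcal{A}_{l,i}$ with $l \neq i,j$, so the induced alignment edges line up with the new alliance partition, and any alignment-set merging that does happen occurs between alliances whose hostility pattern remains consistent with the cooperative-conflict invariant. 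Once this check is in place, no internal conflict is generated, the augmentation is legal, and the topology we started from is not maximal, which is the claim.
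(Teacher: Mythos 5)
There is a genuine gap, and it sits exactly at the obstacle you flagged in your last paragraph. Your augmentation adds interference links \emph{between} $\mathcal{A}_i$ and $\mathcal{A}_j$ themselves. Take a receiver $p$ with $W_p\in\mathcal{A}_{i,l}$ for some third alliance $\mathcal{A}_l$ (since $\mathcal{A}_{i,j}=\emptyset$, every interfered message of $\mathcal{A}_i$ lies in such a sub-alliance). Before augmentation $p$ hears exactly the transmitters of $\mathcal{A}_l$; after augmentation it also hears all of $\mathcal{A}_j$, so alignment edges appear between every message of $\mathcal{A}_j$ and every message of $\mathcal{A}_l$, fusing $\mathcal{A}_j$ and $\mathcal{A}_l$ into one alignment set. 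In any nondegenerate instance $\mathcal{A}_j$ and $\mathcal{A}_l$ are mutually partial hostile, i.e.\ at least one of $\mathcal{A}_{j,l}$, $\mathcal{A}_{l,j}$ is nonempty, so the fused set contains a conflict edge: an internal conflict. The augmented topology then fails to achieve DoF $1/2$ and cannot witness non-maximality. Your proposed dispatch of this rests on a false premise, namely that ``the interferer set at every receiver in $\mathcal{A}_i$ is the same.'' Cooperative conflict is a transmit-side uniformity (all transmitters of an alliance are heard together wherever one is heard); on the receive side, distinct sub-alliances $\mathcal{A}_{i,l}$ of the \emph{same} alliance are interfered by \emph{different} alliances $\mathcal{A}_l$, so equipping every receiver of $\mathcal{A}_i$ with the full interferer set $\mathcal{A}_j$ on top of its existing one pushes those receivers into the two-interfering-alliance (DoF $1/3$) regime rather than preserving the structure.

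The paper's argument goes the other way and avoids this entirely: since there are no conflict edges between $\mathcal{A}_i$ and $\mathcal{A}_j$, it merges them into a single alliance \emph{without adding any link between them}, and then observes that cooperative conflict of the merged alliance licenses new links toward third alliances --- from transmitters of $\mathcal{A}_j$ to receivers in $\mathcal{A}_{k,i}$ and from transmitters of $\mathcal{A}_i$ to receivers in $\mathcal{A}_{k,j}$. Those additions make a receiver in $\mathcal{A}_{k,i}$ hear $\mathcal{A}_i\cup\mathcal{A}_j$, so the only alignment-set fusion they induce is between $\mathcal{A}_i$ and $\mathcal{A}_j$, which is harmless precisely because the hypothesis guarantees no conflict edge between them. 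If you want to keep a constructive ``exhibit an addable link'' style, that is the link you should exhibit; your choice of link is the one addition the hypothesis does \emph{not} make safe.
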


\begin{proof}Since there are no conflict edges between messages in $\mathcal{A}_i$ and $\mathcal{A}_j$, two alliances can be merged into an alliance. If we merge them, the merged alliance should follow the cooperative conflict of alliance. However, conflict edges between messages in $\mathcal{A}_{k,i}$ and $\mathcal{A}_{j}$ and conflict edges between messages in $\mathcal{A}_{k,j}$ and $\mathcal{A}_{i}$ can be added for all distinct $i$, $j$, and $k$, which means that the topology is not maximal.\end{proof}



Using  the sub-alliances and the mutually partial hostility, Theorem 1 can be modified into the following theorem.

\begin{theorem}[$N$-alliance construction]
Suppose that there exist $N$ alliances for $K$-user interference channel.  A topology derived from $N$-alliance construction is maximal if and only if any $\mathcal{A}_i$ and $\mathcal{A}_j$ are mutually partial hostile, that is, 
\begin{equation}
\mathcal{A}_i \rightleftharpoons \mathcal{A}_j, \,\text{for all}\, i,j.
\end{equation}
 Further, the achievable symmetric DoF is optimal as
 \begin{equation}
d_{sym} = \frac{1}{2}.
\end{equation}
\end{theorem}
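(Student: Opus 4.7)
The theorem extends Theorem 1 from two alliances to $N$ alliances, with full mutual hostility replaced by the sub-alliance-based mutual partial hostility. My plan is to prove both directions by contradiction: necessity reduces directly to Lemma \ref{aml}, while sufficiency requires tracking which alignment sets get merged when a new interference link is added.

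For the necessity direction, assume the topology derived from the $N$-alliance construction is maximal but, toward a contradiction, that some pair $\mathcal{A}_i$ and $\mathcal{A}_j$ fails to be mutually partial hostile. Because the $N$-alliance construction always produces inter-alliance conflicts in the mutually-partial-hostile form whenever at least one of $\mathcal{A}_{i,j}$ or $\mathcal{A}_{j,i}$ is non-empty, the only way the condition can fail is that both sub-alliances are empty. Lemma \ref{aml} then immediately contradicts maximality, because additional conflict edges can be inserted between $\mathcal{A}_j$ and each $\mathcal{A}_{k,i}$, and between $\mathcal{A}_i$ and each $\mathcal{A}_{k,j}$, without introducing internal conflict.

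For the sufficiency direction, assume every pair of alliances is mutually partial hostile, and suppose for contradiction that some interference link can be added without producing internal conflict. The deconflict condition rules out any within-alliance link, so the new link produces a fresh conflict edge $W_a - W_b$ with $W_a \in \mathcal{A}_i$, $W_b \in \mathcal{A}_j$, $i \neq j$, and necessarily $W_a \notin \mathcal{A}_{i,j}$ and $W_b \notin \mathcal{A}_{j,i}$ (otherwise the edge would already exist). Using a sub-alliance membership $W_b \in \mathcal{A}_{j,k}$ for some $k \neq i$, the receiver for $W_b$ already hears every transmitter in $\mathcal{A}_k$, so adding an interference link from $a$ to $b$ forces alignment edges between $W_a$ and every message of $\mathcal{A}_k$. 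These alignment edges merge $\mathcal{A}_i$ and $\mathcal{A}_k$ into a single alignment set, and mutual partial hostility of $\mathcal{A}_i$ and $\mathcal{A}_k$ supplies at least one pre-existing inter-alliance conflict edge lying inside this merged alignment set, which is an internal conflict. The reverse direction (link from $b$ to $a$) follows by interchanging the roles of $i$ and $j$. The achievable symmetric DoF claim $d_{\mathrm{sym}} = 1/2$ follows from the linear beamforming construction presented in the next subsection and does not require a separate argument here.

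The main obstacle is the sufficiency step, because the message graph abstracts away the direction of the underlying interference links, so the analysis must cover both directions and must precisely identify which alliance mergers are forced by the new alignment edges. I expect to overcome this by leveraging the fact that mutual partial hostility guarantees the non-emptiness of at least one of $\mathcal{A}_{i,k}$ or $\mathcal{A}_{k,i}$ for every $k$, which is exactly what places a surviving conflict edge inside the merged alignment set once the new link is added.
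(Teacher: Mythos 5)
Your proof is correct in substance and follows essentially the same route as the paper: necessity is reduced to Lemma 3 (empty sub-alliance pair implies non-maximality), and sufficiency is argued by showing that any added conflict edge from $W_a\in\mathcal{A}_i$ to $W_b\in\mathcal{A}_{j,k}$ forces alignment edges between $W_a$ and all of $\mathcal{A}_k$, merging $\mathcal{A}_i$ and $\mathcal{A}_k$ into one alignment set in which the pre-existing hostility between $\mathcal{A}_i$ and $\mathcal{A}_k$ becomes an internal conflict. This is exactly the paper's sufficiency argument, and your explicit handling of the two link directions and of the exclusion $W_b\notin\mathcal{A}_{j,i}$ is, if anything, slightly more careful than the paper's.

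The one place where you diverge is the necessity direction: you assert that the only way mutual partial hostility can fail is $\mathcal{A}_{i,j}=\mathcal{A}_{j,i}=\emptyset$, on the grounds that the construction always produces conflicts in the correct sub-alliance form. The paper instead treats three failure cases: (i) both sub-alliances empty, (ii) overlapping sub-alliances $\mathcal{A}_{i,j}\cap\mathcal{A}_{i,k}\neq\emptyset$, and (iii) non-covering sub-alliances $\cup_{j}\mathcal{A}_{i,j}\neq\mathcal{A}_i$. Your dismissal of (ii) and (iii) implicitly assumes the sub-alliances already form a proper partition of each alliance; that is defensible under Definition 9, but case (iii) corresponds to a concrete maximality violation (a message whose receiver hears no interference at all, so links can be added), and case (ii) produces an internal conflict via a forced merger of $\mathcal{A}_j$ and $\mathcal{A}_k$. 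To make the necessity direction airtight you should either explicitly invoke the partition properties of Definition 9 to rule these out, or handle them as the paper does. This is a minor omission rather than a flaw in the approach.
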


\begin{proof}(Necessary) Assume that for some $i$ and $j$, $\mathcal{A}_i$ and $\mathcal{A}_j$ are not mutually partial hostile, where there are three cases: i) $\mathcal{A}_{i,j}=\mathcal{A}_{j,i}=\emptyset$, ii) $\mathcal{A}_{i,j}\cap\mathcal{A}_{i,k}\neq\emptyset$, iii) $\cup_{j}\mathcal{A}_{i,j}\neq\mathcal{A}_i$. From Lemma 3, we have already proved the case i). For the second case, since messages in $\mathcal{A}_j$ and $\mathcal{A}_k$ has common messages to conflict with messages in $\mathcal{A}_i$, messages in $\mathcal{A}_j$ and $\mathcal{A}_k$ should be combined into an alignment set. But there exist conflicts between messages in $\mathcal{A}_{j,k}$ and $\mathcal{A}_{k}$ or messages in $\mathcal{A}_{k,j}$ and $\mathcal{A}_{j}$, which means that internal conflicts exist and the topology is not maximal. For the third case, there are at least one messages in $\mathcal{A}_{i}$, which is not interfered with. Then, some interference links can be added and thus the topology is not maximal.

(Sufficient) Assume that $N$ alliances are mutually partial hostile. Due to the mutually partial hostility for all pairs of alliances, there exist conflicts between all messages in $\mathcal{A}_i$ and $\mathcal{A}_{j,i}$ and between all messages in $\mathcal{A}_j$ and $\mathcal{A}_{i,j}$ for any distinct $i$ and $j$.  Let us add a conflict edge from a message $W_e$ in $\mathcal{A}_{i}$ to $W_p$ in $\mathcal{A}_{j,k}$ for any distinct $i$, $j$, and $k$. Then all messages in $\mathcal{A}_{k,i}$ and $W_e$ are connected with alignment edges, which results that all messages in $\mathcal{A}_i$ and $\mathcal{A}_k$ are tied as an alignment set. Since $\mathcal{A}_i$ and $\mathcal{A}_k$ are hostile to each other, there exist internal conflict in the alignment set $\mathcal{A}_i \cup \mathcal{A}_k$. Thus, internal conflict always occurs by adding a conflict edge between any two messages and thus the topology is maximal. Similarly, we omit the proof of DoF achievability.\end{proof}

The following corollary can be stated without proof.

\begin{corollary}
A topology is maximal if and only if all distinct alignment sets are alliances with mutually partial hostility.
\end{corollary}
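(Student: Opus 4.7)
The plan is to derive Corollary 1 directly from Theorem 2 by identifying the alignment sets of a maximal topology with the $N$ alliances appearing in the $N$-alliance construction. The corollary carries no new content beyond this identification, so the proof should be short: the real work was done in Lemma 1 and Theorem 2.

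For the forward direction, I would assume the topology is maximal and show that every alignment set is an alliance. The deconflict condition of the alliance definition holds automatically, since a maximal topology contains no internal conflict and thus no two messages inside a common alignment set are joined by a conflict edge. The cooperative conflict condition is precisely the content of Lemma 1. Hence each alignment set satisfies both alliance conditions and the partition of messages into alignment sets is a partition into alliances. Calling the number of these alliances $N$ and invoking the necessity part of Theorem 2, one concludes that every pair $\mathcal{A}_i,\mathcal{A}_j$ must satisfy $\mathcal{A}_i \rightleftharpoons \mathcal{A}_j$; otherwise additional conflict edges could be added without creating internal conflict, contradicting maximality.

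For the reverse direction, assume that every alignment set is an alliance and that the alliances are pairwise mutually partial hostile. This is exactly the hypothesis of the sufficient direction of Theorem 2, which gives maximality immediately.

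The main obstacle, modest as it is, lies in the forward direction: one must be careful to verify that the alignment-set partition really satisfies the full alliance definition, including that the sub-alliance structure $\bigcup_{j\neq i}\mathcal{A}_{i,j}=\mathcal{A}_i$ is well defined. This follows because Lemma 1 guarantees that the conflict neighborhood of each alignment set, viewed from the outside, decomposes into disjoint blocks according to which sub-alliance conflicts with which external alliance; any failure of this decomposition would either produce an un-conflicted message inside $\mathcal{A}_i$ (allowing extra conflict edges, contradicting maximality) or overlap conflicts that force two alignment sets to merge (contradicting the assumed alignment-set partition). Once this bookkeeping is confirmed, the corollary reduces to a restatement of Theorem 2 and the proof is complete.
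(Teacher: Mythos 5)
Your proposal is correct and follows exactly the route the paper intends: the paper states this corollary without proof precisely because it is the identification of alignment sets with alliances (deconflict from the absence of internal conflict, cooperative conflict from Lemma 1) followed by an application of both directions of Theorem 2. Your additional bookkeeping on the sub-alliance partition is a reasonable fleshing-out of the same argument rather than a different approach.
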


Let $\mathcal{A}$ be a set of alliances given as $\mathcal{A}=\{\mathcal{A}_{1},\mathcal{A}_{2},\cdots,\mathcal{A}_{N}\}$. Here, alliance $\mathcal{A}_{i}$ is partitioned into sub-alliances $\mathcal{A}_{i,j} $, $1\leq j \leq N$, $j\neq i$. Let $\mathcal{A}_{sub}$ be a set of sub-alliances given as $\mathcal{A}_{sub}=\{\mathcal{A}_{i,j}\mid 1 \leq i,j \leq N, i\neq j \}$. Then, the sub-alliance graph is defined as:

\begin{definition}[Sub-alliance graph] Suppose that there are $N$ alliances. A directed graph $\mathcal{G}=(\mathcal{A}_{sub},\mathcal{I})$ is called a sub-alliance graph, where there exist directed edges from all sub-alliances in $\mathcal{A}_i$ to $\mathcal{A}_{j,i}$ for each pair of $i$ and $j$.
\end{definition}

\begin{proposition}[Topology derivation] A maximal topology $\mathcal{T}_{K}^{\mathrm{M}}$ is derived from sub-alliance graph as follows:
\begin{enumerate}[label=(\roman*)]
\item There exists a direct link between transmitter $m$ and receiver $m$ for each message $W_m$.
\item There exists an interference link between transmitter $m$ whose message belongs to alliance $\mathcal{A}_{i}$ and receiver $n$ whose message belongs to $\mathcal{A}_{j,i}$ for all distinct $i$ and $j$.
\end{enumerate}
\end{proposition}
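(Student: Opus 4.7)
The plan is to show that the two construction rules merely re-state, in the physical-channel language of transmitter-receiver links, the conflict and alignment information already encoded by the sub-alliance graph. Once this translation is carried out, maximality and the DoF claim follow by a direct appeal to Theorem 2.

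First I would handle rule (i): every message $W_m$ is, by the channel model, a desired transmission sent by transmitter $m$ to receiver $m$, so the direct link must be present regardless of how the alliances are formed. This step is essentially a restatement of the channel model and requires no real work. Next, for rule (ii), I would exploit the definition of a sub-alliance. Given distinct $i,j$ and any $W_m\in\mathcal{A}_i$, $W_n\in\mathcal{A}_{j,i}$, the membership $W_n\in\mathcal{A}_{j,i}$ forces a conflict between $W_n$ and every message of $\mathcal{A}_i$, in particular with $W_m$. By the definition of the conflict graph, such a conflict is present exactly when the source of $W_m$ (transmitter $m$) is heard by the receiver that desires $W_n$ (receiver $n$), which is precisely the interference link asserted by the rule. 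Conversely, the directed edges of $\mathcal{G}=(\mathcal{A}_{sub},\mathcal{I})$ were defined to track exactly these inter-alliance conflicts, so rule (ii) introduces every required interference link and no spurious one.

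The main obstacle will be verifying that the prescription neither omits an interference link demanded by the alliance conditions nor produces one that destroys maximality. I would argue this by combining the partition identity $\bigcup_{j\neq i}\mathcal{A}_{i,j}=\mathcal{A}_i$ built into the sub-alliance definition, which ensures that every message in every alliance is covered by exactly one sub-alliance with respect to each other alliance, with Theorem 2, which certifies that any topology realized from mutually partial hostile alliances is maximal and attains $d_{sym}=1/2$. The chief conceptual difficulty is simply keeping the message-level graph (alignment/conflict/alliance) and the physical-level bipartite graph (transmitters and receivers) clearly separated so that the relevant definitions can be applied without ambiguity; once that bookkeeping is in place, the derivation of $\mathcal{T}_K^{\mathrm{M}}$ from $\mathcal{G}$ is immediate.
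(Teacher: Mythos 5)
Your argument is correct and matches what the paper intends: the paper states this proposition without any proof, treating it as the immediate translation of conflict edges (sub-alliance membership) into transmitter--receiver interference links, with maximality inherited from Theorem 2. The only point worth making explicit is that the appeal to Theorem 2 requires the sub-alliance graph to come from alliances that are pairwise mutually partial hostile (in particular, $\mathcal{A}_{i,j}$ and $\mathcal{A}_{j,i}$ not both empty for any pair, cf.\ Lemma 3); your invocation of the partition identity alone does not guarantee this, so it should be stated as a standing hypothesis on the given sub-alliance graph.
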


\begin{figure*}[t]
\centering
\subfigure[Sub-alliance graph]{\includegraphics[width=0.35\linewidth]{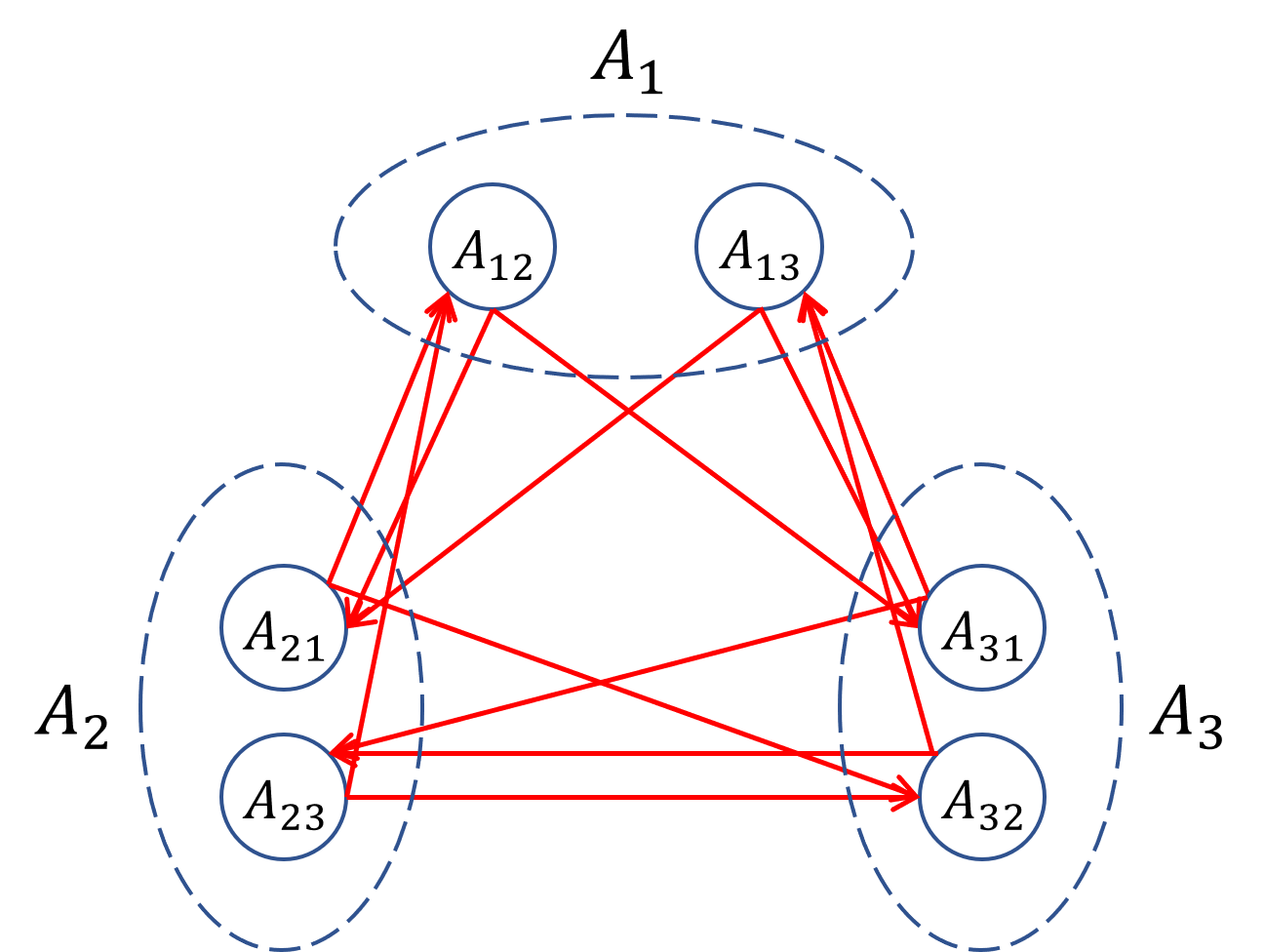}}
\hspace{13pt}
\subfigure[Maximal topology]{\includegraphics[width=0.2\linewidth]{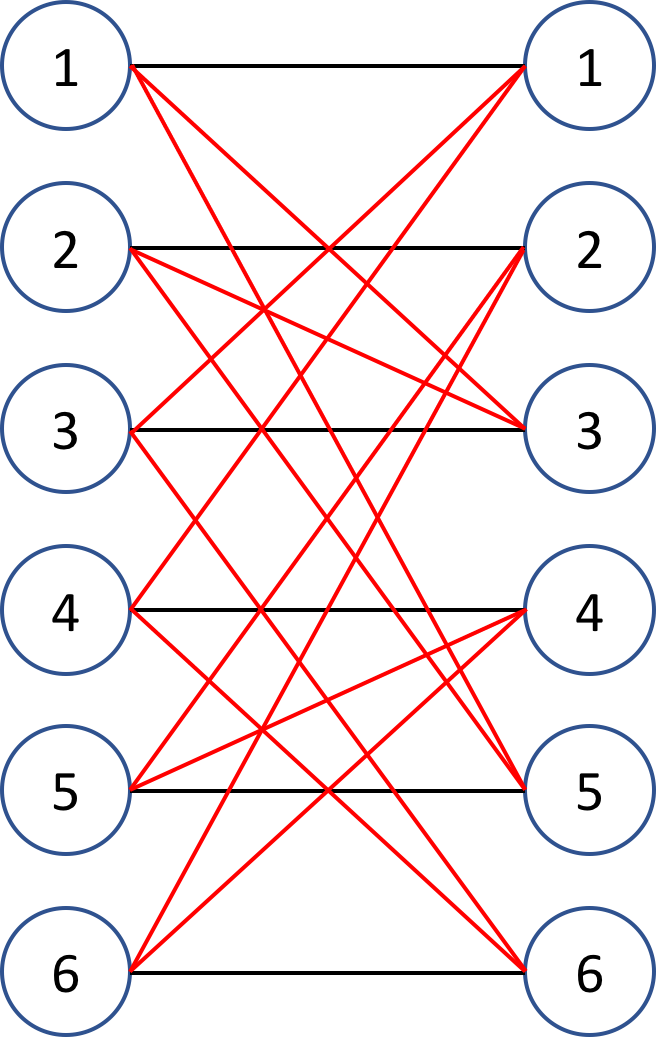}}
\hspace{13pt}
\subfigure[Alignment-conflict graph]{\includegraphics[width=0.28\linewidth]{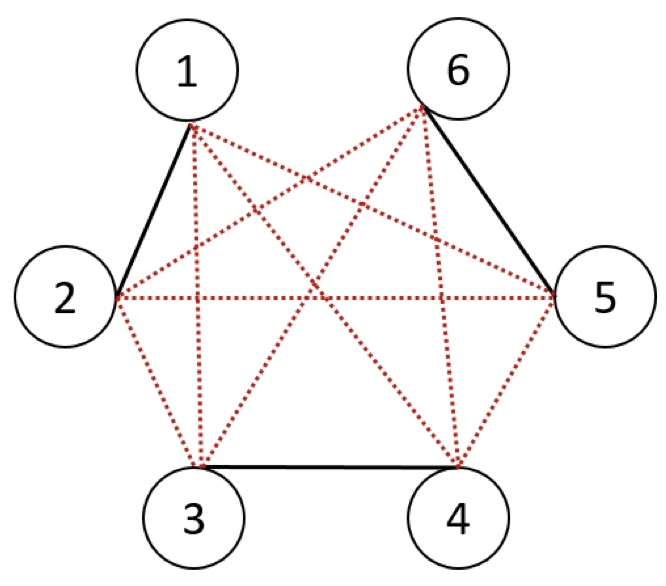}}
\caption{Maximal topology derived from 3-alliance construction for 6-user interference channel.}
\label{NAC}
\end{figure*}

\begin{example}
A sub-alliance graph is given in Fig. 3 (a).  A maximal topology from Fig. 3 (a) is given in Fig. 3 (b) and its alignment-conflict graph is given in Fig. 3 (c).


Let $\mathcal{A}_{1}=\mathcal{{A}}_{1,2}\cup \mathcal{{A}}_{1,3}$, whose sub-alliances are $\mathcal{A}_{1,2}=\{W_1\}$ and $\mathcal{A}_{1,3}=\{W_2\}$, $\mathcal{A}_{2}=\mathcal{{A}}_{2,1}\cup \mathcal{{A}}_{2,3}$, whose sub-alliances are $\mathcal{A}_{2,1}=\{W_3\}$ and $\mathcal{A}_{2,3}=\{W_4\}$, and $\mathcal{A}_{3}=\mathcal{{A}}_{3,1}\cup \mathcal{{A}}_{3,2}$, whose sub-alliances are $\mathcal{A}_{3,1}=\{W_5\}$ and $\mathcal{A}_{3,2}=\{W_6\}$. Contrary to 2-alliance construction, messages in each alliance are partially interfered by all messages in each alliance and partitioned into sub-alliances indicating interferers. There are many ways to construct alliances by changing the number of alliances and partitioning messages into sub-alliances differently.
\end{example}

\subsection{Beamforming Vector Design for Alliance Construction}

In this subsection, we design a beamforming vector assigned for messages in each alliance and show that the linear beamforming scheme with alliance construction achieves the optimal symmetric DoF in TIM. Suppose that there are $N$ alliances with mutually partial hostility for $K$-user interference channel and we use two time extensions for beamforming vectors. Then $N$ pairwise linearly independent beamforming vectors can be constructed and each of them is allotted to each alliance. Let $\bfit{V}_{n}$ be a $2\times1$ beamforming vector for messages in alliance $\mathcal{A}_{n}$, $n\in\{1,2,\cdots,N\}$. There are no conflict among messages in an alliance and the messages in $\mathcal{A}_{n,m}$ are only interfered by all messages in $\mathcal{A}_{m}$. Consider the $i$th receiver that wants message $W_i$, which belongs to sub-alliance $\mathcal{A}_{n,m}$ after the alliance construction. Then the $2\times1$ received signal vector at receiver $i$ for two time slots is given as

\begin{equation}\label{bvd}
    \bm{Y}_{i}={\textit{h}_{ii}\bm{V_{n}}W_{i}}+\sum_{W_k\in \mathcal{A}_{m}}{\textit{h}_{ik}\bm{V_{m}}W_{k}}+\bm{{Z}_{i}}.
\end{equation}
Since $\bm{V}_{n}$ and $\bm{V}_{m}$ are linearly independent, receiver $i$ can null the aligned interference signals corresponding to messages in $\mathcal{A}_m$ and recover $W_i$. In the same way, every receiver can decode its desired message by only two time extensions, which means that the network achieves the optimal symmetric DoF 1/2 in TIM. Alliance construction relates messages in such a way that every message in each alliance must be interfered by all messages from only one alliance. The beamforming vectors split each received signal space into two subspaces with desired message and one directional aligned interference signals for all receivers

\subsection{Maximum Number of Alliances and Partition of Messages into Alliances}


\begin{figure*}[t]
\centering
\subfigure{\includegraphics[width=0.5\linewidth]{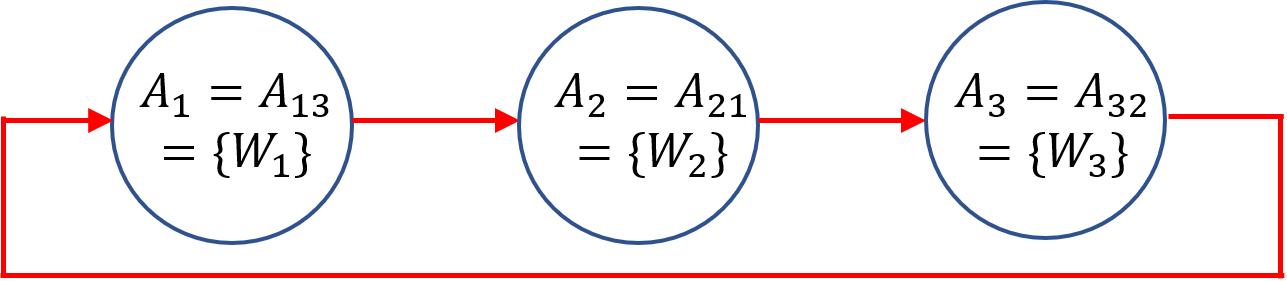}}
\caption{3-alliance construction requiring the minimum number of messages.}
\label{Partition}
\end{figure*}

In this subsection, we derive the maximum number of alliances for a given number of messages $K$ for the alliance construction. In fact, we derive the minimum number of messages required to construct $N$ alliances with mutual partial hostility rather than the maximum number of alliances that can be made with $K$ messages. Let $a_N$ be the minimum number of messages which can construct $N$ alliances with mutual partial hostility. It is trivial that $a_1=1$ and $a_2=2$. When $N=3$, the alliance construction requires the minimum number of messages, $K=3$, where there is only a message in each alliance and hostility between alliances is a tail-ending as in Fig.논문 \ref{Partition}.


It is clear that when all alliances are connected by only one-way hostility, the alliance construction requires the minimum number of messages. Every non-empty sub-alliance requires at least one message. And it is enough to make mutually partial hostility for each pair of alliances with only one non-empty sub-alliance of them. Assume that $N$ alliances have already been constructed using the minimum number of messages $a_N$ and we want to add a new alliance $\mathcal{A}_{N+1}$. In this situation, $\mathcal{A}_{N+1}$ should relate hostility with all existing $N$ alliances, which requires non-empty sub-alliance $\mathcal{A}_{i,N+1}$ or $\mathcal{A}_{N+1,i}$ for each $i \in \{1,\ldots,N\}$ and this requires at least $N$ additional messages. Thus, the recurrence relation is formulated as

\begin{equation}
    a_{N+1}=a_{N}+N, \hspace{10pt}N \geq 3
\end{equation}
and thus $a_{N}$ is computed as
\begin{equation}\label{mnu}
    a_{N}={N \choose 2},\hspace{10pt}N \geq 3.
\end{equation}
In fact, alliance construction with the minimum number of messages is equivalent to the handshake problem. Using (\ref{mnu}), the maximum number of alliances for given $K$ users can be derived as follows. Let $N$ be the maximum number of alliances for a given number of messages $K$. The range of $K$ which can construct alliances up to $N$ is given as
\begin{equation}
    \hspace{10pt}{N \choose 2} \leq K < {N+1 \choose 2}.
\end{equation}
It is clear that different alliance constructions are possible for the same number of alliances and messages because there are many way to partition messages into sub-alliances. The partition of messages for a given number of alliances $N$ is summarized as follows.

\begin{theorem}[Partition of messages]
There exist $N$ alliances with mutually partial hostility for $K$ user interference channel, if the number of messages in sub-alliance satisfies the following conditions for $K\geq{N \choose 2}$:
\begin{enumerate}[label=(\roman*)]
\item   $ \sum_{j=1, j\neq i}^{N}\vert\mathcal{A}_{j,i}\vert\geq 1$, for all $i$;
\item   $ \sum_{j=1, j\neq i}^{N}\vert\mathcal{A}_{i,j}\vert\geq 1$, for all $i$;
\item  $\vert\mathcal{A}_{i,j}\vert+\vert\mathcal{A}_{j,i}\vert\geq1$, for all distinct $i$ and $j$;
\item  $ \sum_{i=1}^{N}\sum_{j=1, j\neq i}^N\vert\mathcal{A}_{i,j}\vert=K$.
\end{enumerate}

\label{THM_PARTITION}
\end{theorem}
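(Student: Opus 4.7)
The plan is to prove sufficiency by direct verification: I would start from any family of sub-alliances $\{\mathcal{A}_{i,j}\}_{i\neq j}$ satisfying conditions (i)-(iv) and show that the induced alliances $\mathcal{A}_i=\bigcup_{j\neq i}\mathcal{A}_{i,j}$ meet every clause of Definition 8 pairwise, so that the hypothesis of Theorem 2 is fulfilled. The proof has no analytic content; it is essentially a translation between the set-algebraic conditions in the statement and the hostility language already established in Definitions 6-8.

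The first step is a counting step. By Definition 6 the sub-alliances $\{\mathcal{A}_{i,j}\}_{j\neq i}$ are pairwise disjoint with union $\mathcal{A}_i$, so
\[
|\mathcal{A}_i|=\sum_{j=1,\,j\neq i}^{N}|\mathcal{A}_{i,j}|.
\]
Condition (ii) then says $|\mathcal{A}_i|\geq 1$ for every $i$, i.e. no alliance is empty, and condition (iv) says these $N$ alliances together cover all $K$ messages, so we have $N$ genuine alliances partitioning the message set. The lower bound $K\geq\binom{N}{2}$ is necessary for (i)-(iv) to be jointly satisfiable, as already established by the handshake argument preceding the theorem.

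The second step is a pair-by-pair check. For any distinct $i,j$, condition (iii) is literally the clause "at least one of $\mathcal{A}_{i,j}$ and $\mathcal{A}_{j,i}$ is non-empty" appearing in Definition 8, so $\mathcal{A}_i \rightleftharpoons \mathcal{A}_j$ holds directly. In particular, this excludes the Lemma 3 degeneracy $\mathcal{A}_{i,j}=\mathcal{A}_{j,i}=\emptyset$ in which two alliances would be forced to merge. The third step handles condition (i), which rewrites as: there exists some $j$ with $\mathcal{A}_{j,i}\neq\emptyset$, meaning $\mathcal{A}_i \rightharpoonup \mathcal{A}_j$; so every alliance is an active source of hostility toward at least one other. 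Without this, some alliance could be purely passive and additional outgoing interference links could be added without creating internal conflict, violating maximality per Theorem 2. After these three checks, Theorem 2 applies directly and the claim follows.

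The main obstacle I expect is conceptual rather than technical: disentangling the roles of (i), (ii), (iii). In particular, (i) is not implied by the others, and to make this transparent I would include small witnesses of (ii), (iii), (iv) without (i), and vice versa, so that each hypothesis is seen to be independently needed. Once the independence and the direct correspondence with Definitions 7-8 are made explicit, the rest of the proof is a short bookkeeping argument invoking Theorem 2.
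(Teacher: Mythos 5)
The paper itself omits the proof of this theorem (it says so explicitly), offering only a one-sentence gloss of each condition; your proposal is a correct fleshing-out of exactly that gloss --- translating (i)--(iv) into the hostility language of Definitions 6--8, checking non-emptiness and coverage, and invoking Theorem 2 --- so it matches the paper's intended but unwritten argument. The one small divergence is your reading of condition (i): the paper's stated role for it is that every alliance $\mathcal{A}_i$ has at least one common message to conflict with, i.e.\ some $\mathcal{A}_{j,i}\neq\emptyset$, which is what creates the alignment edges binding the messages of $\mathcal{A}_i$ into a single alignment set, whereas you justify it via maximality; your use of (i) in the argument itself is nonetheless sound, and your observation that (i) is not implied by (ii)--(iv) is correct.
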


The first inequality constraints that every alliance has at least one common messages to conflict with. The second one constraints that every alliance has at least one messages. The third inequality is necessary and sufficient conditions of the mutual hostility between $\mathcal{A}_i$ and $\mathcal{A}_j$. The last one means that every message should belong to a sub-alliance. The condition for $K$ is required to ensure enough messages for constructing $N$ alliances. We omit the proof of the theorem.





\subsection{Discriminant and Transformation of Maximal Topology}

In this subsection, discriminant of maximal topology is proposed using alliance construction and the transformation of non-maximal topology into maximal one is also proposed.

\begin{proposition}[Discriminant of maximal topology] 
The maximality of topology is determined as follows:

\begin{enumerate}[label=(\roman*)]
\item Construct all alignment sets (i.e., tentative alliances) for a given topology.
\item Investigate messages in each alignment set whether they follow the deconflict and cooperative conflict of alliance or not. If yes, alignment sets become alliances.
\item Investigate whether all alliances are pairwise mutually hostile or not, that is, there is no message which is not interfered and there is no pair of alliances $\mathcal{A}_{i}$ and $\mathcal{A}_{j}$ whose sub-alliances $\mathcal{A}_{i,j}$ and $\mathcal{A}_{j,i}$ are both empty sets. If yes, the topology is maximal.
\end{enumerate}

\end{proposition}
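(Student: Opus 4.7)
The plan is to reduce the three-step discriminant to Corollary 1, which already characterizes maximality as the property that every alignment set is an alliance with pairwise mutually partial hostility. The proof then splits into a sufficiency argument (all three checks pass implies maximal) and a necessity argument (topology maximal implies all three checks pass), each of which is a short appeal to the earlier results.

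For sufficiency I would proceed as follows. Step (i) produces the alignment sets purely syntactically, as the connected components of the alignment graph associated with the topology, so no argument is required there. If step (ii) succeeds, then every alignment set satisfies both deconflict and cooperative conflict, and so by the definition of alliance each alignment set is an alliance. If step (iii) then succeeds, the collection of alliances is pairwise mutually partial hostile, and Theorem 2 yields maximality directly.

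For necessity I would contrapose each step. If step (ii) fails because some alignment set contains an internal conflict, then by the definition of maximal topology the topology is already disqualified. If step (ii) fails because cooperative conflict is violated in some alignment set, then the contrapositive of Lemma 1 gives non-maximality. If step (iii) fails, then either some message in an alliance is not interfered with at all, so that an interference link can be added without creating an internal conflict, or there is a pair of alliances $\mathcal{A}_i, \mathcal{A}_j$ with $\mathcal{A}_{i,j}=\mathcal{A}_{j,i}=\emptyset$, and Lemma 3 then supplies non-maximality.

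The main obstacle is not any single lemma but rather making the intermediate data well defined. In step (ii) an alignment set is only retroactively called an alliance once the two checks pass, and in step (iii) one needs the sub-alliance decomposition $\mathcal{A}_i=\bigcup_{j\ne i}\mathcal{A}_{i,j}$ to be unambiguous before the mutual partial hostility check can even be stated. The delicate point is to argue that this partition is canonically determined, namely by grouping the messages of $\mathcal{A}_i$ according to which alliance their conflict neighbors lie in, a partition that is well defined precisely because cooperative conflict holds. Once this observation is in place, the remaining work is routine bookkeeping across Corollary 1, Lemmas 1 and 3, and Theorem 2.
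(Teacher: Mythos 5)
Your argument is correct and follows exactly the route the paper intends: the paper states this proposition without an explicit proof, treating it as a procedural restatement of Corollary 1 (equivalently, Theorem 2 combined with Lemmas 1 and 3 and the definition of maximal topology), which is precisely the reduction you carry out. Your added remark that the sub-alliance partition of each $\mathcal{A}_i$ must be well defined is a sensible point the paper glosses over; note it holds because distinct alliances are distinct connected components of the alignment graph, so no message can be interfered by two different alliances, rather than as a consequence of cooperative conflict alone.
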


We propose how to transform non-maximal topology into maximal topology. Only the topology whose alignment sets have no internal conflict can be transformed into maximal topology by adding interference links properly. The reason why we consider disconnected interference links is that the links treated as disconnected actually exist and degenerate the SINR performance. And thus, it is desirable to consider more interference links without changing the optimality of DoF in TIM.

Before transformation, alignment sets (i.e, tentative alliances) are constructed such that each alignment set follows the deconflict of messages in the set. 


\begin{proposition}[Transformation of non-maximal topology into maximal topology] 
Suppose that all alignment sets have no internal conflict.
\begin{enumerate}[label=(\roman*)]

\item Add interference links so that all messages in each alignment set follow cooperative conflict of alliance, which makes all alignment sets into alliances.
\item If there is no hostility between $\mathcal{A}_{i}$ and $\mathcal{A}_{j}$, there are two ways to transform the topology as:
\begin{enumerate}
	\item Merge two alliances into a single one by combining $\mathcal{A}_{k,j}$ and $\mathcal{A}_{k,i}$ for distinct $i$, $j$, and $k$, where combining $\mathcal{A}_{k,j}$ and $\mathcal{A}_{k,i}$ requires to add conflict edges from all messages in $\mathcal{A}_{i}$ to each message in $\mathcal{A}_{k,j}$ and from all messages in $\mathcal{A}_{j}$ to each message in $\mathcal{A}_{k,i}$.
\item If there exist a message $W_n \in \mathcal{A}_{i}$ or $W_n \in \mathcal{A}_{j}$ whose receiver is not given any interference, add corresponding conflict edges from all messages in $\mathcal{A}_{j}$ to messages $W_n$ or from all messages in $\mathcal{A}_{i}$ to messages $W_n$.
\end{enumerate}

\item If there still exists a message in $\mathcal{A}_i$ whose receiver is not given any interference, add conflict edges from all messages in any alliances except $\mathcal{A}_i$ to the message.
\end{enumerate}
\end{proposition}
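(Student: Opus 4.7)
The plan is to verify each transformation step preserves the absence of internal conflict while increasing the number of interference edges, and then to invoke Theorem~2 once the final structure satisfies mutual partial hostility for every pair of alliances.

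For step (i), I would start from the hypothesis that no alignment set has internal conflict. Within a single alignment set $\mathcal{S}$, adding a conflict edge from a message $W_m\in\mathcal{S}$ to some external message $W_k$ that already conflicts with another $W_{m'}\in\mathcal{S}$ cannot create internal conflict, since the edge goes out of $\mathcal{S}$; and it does not alter which messages lie in $\mathcal{S}$ because alignment edges are not touched. Iterating this for every external ``enemy'' of any member of $\mathcal{S}$ promotes $\mathcal{S}$ to satisfy cooperative conflict, turning it into an alliance in the sense of Definition~6. Doing this for every alignment set yields a topology whose alignment sets are precisely alliances.

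For step (ii), suppose after (i) there exist alliances $\mathcal{A}_i,\mathcal{A}_j$ with $\mathcal{A}_{i,j}=\mathcal{A}_{j,i}=\emptyset$. By Lemma~3 the current topology is not maximal. I would argue (ii)(a): adding conflict edges from all of $\mathcal{A}_i$ to each message of $\mathcal{A}_{k,j}$ and from all of $\mathcal{A}_j$ to each message of $\mathcal{A}_{k,i}$ keeps the new alignment structure consistent, because after these additions every message in $\mathcal{A}_{k,j}\cup\mathcal{A}_{k,i}$ is interfered by the common alliance $\mathcal{A}_i\cup\mathcal{A}_j$, so $\mathcal{A}_i$ and $\mathcal{A}_j$ merge into a single alignment set which still satisfies deconflict (no conflict edges were added inside $\mathcal{A}_i\cup\mathcal{A}_j$) and cooperative conflict (by construction). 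The alternative (ii)(b) applies only when some $W_n\in\mathcal{A}_i$ (or $\mathcal{A}_j$) has a receiver without any incoming interference; then adding conflict edges from all of $\mathcal{A}_j$ (resp.\ $\mathcal{A}_i$) to $W_n$ creates hostility $\mathcal{A}_j\to\mathcal{A}_{i,j}$ with $\mathcal{A}_{i,j}=\{W_n\}$ and does not break deconflict within $\mathcal{A}_i$. Either way, after finitely many iterations every pair of alliances is mutually partial hostile.

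For step (iii), any remaining message $W_n\in\mathcal{A}_i$ whose receiver has no interferer is by definition not assigned to any sub-alliance $\mathcal{A}_{i,j}$, violating condition (iv) of Theorem~3. I would add conflict edges from every message outside $\mathcal{A}_i$ to $W_n$; this places $W_n$ simultaneously into all $\mathcal{A}_{i,j}$, $j\neq i$, strengthening (in fact making vacuous) any previous mutually partial hostility condition involving $W_n$ without introducing new alignment edges. After step (iii), the topology satisfies all conditions of Theorem~2, hence is maximal, and the procedure must terminate since each step strictly increases the number of interference links in a bounded complete bipartite graph. The main obstacle is step (ii)(a): one must check carefully that merging $\mathcal{A}_i$ and $\mathcal{A}_j$ through the sub-alliances of a third $\mathcal{A}_k$ does not inadvertently create an internal conflict inside the merged alliance $\mathcal{A}_i\cup\mathcal{A}_j$, which follows because the only new conflict edges created go between $\mathcal{A}_i\cup\mathcal{A}_j$ and $\mathcal{A}_k$, never within $\mathcal{A}_i\cup\mathcal{A}_j$ itself.
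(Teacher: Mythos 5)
The paper gives no formal proof of this proposition; it only states the procedure and illustrates it with Example~3, relying implicitly on Lemma~3 and Theorem~2. Your attempt to actually justify each step is therefore a genuine addition, and steps (i) and (ii) are essentially sound, with one imprecision worth flagging: adding an interference link from a transmitter of $W_m\in\mathcal{S}$ to receiver $k$ \emph{does} create new alignment edges (between $W_m$ and every existing interferer at receiver $k$), so your claim that ``alignment edges are not touched'' is literally false. The correct observation is that all existing interferers at receiver $k$ already lie in $\mathcal{S}$ (otherwise $\mathcal{S}$ would not be a connected component of the alignment graph), so the new alignment edges stay inside $\mathcal{S}$ and the partition into alignment sets is unchanged. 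Your check for (ii)(a) --- that the only new conflict edges run between $\mathcal{A}_i\cup\mathcal{A}_j$ and $\mathcal{A}_k$, while deconflict inside $\mathcal{A}_i\cup\mathcal{A}_j$ is guaranteed by $\mathcal{A}_{i,j}=\mathcal{A}_{j,i}=\emptyset$ --- is the right one.

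Step (iii) of your argument, however, contains a genuine error. You read the instruction as adding conflict edges from \emph{every} alliance other than $\mathcal{A}_i$ to $W_n$, ``placing $W_n$ simultaneously into all $\mathcal{A}_{i,j}$,'' and you claim this only strengthens the hostility structure. Sub-alliances are disjoint by Definition~9, and for good reason: if receiver $n$ hears all transmitters of both $\mathcal{A}_j$ and $\mathcal{A}_k$, then every message of $\mathcal{A}_j$ acquires an alignment edge to every message of $\mathcal{A}_k$ at receiver $n$, so $\mathcal{A}_j$ and $\mathcal{A}_k$ collapse into a single alignment set; since they were mutually partial hostile, that alignment set now contains an internal conflict and the symmetric DoF drops below $1/2$ (equivalently, receiver $n$ sees two linearly independent interference directions and cannot null both within two time slots). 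This is exactly case (ii) of the necessity part of the proof of Theorem~2, i.e.\ $\mathcal{A}_{i,j}\cap\mathcal{A}_{i,k}\neq\emptyset$. The intended reading --- confirmed by Example~3, which adds ``interference links from transmitters in an \emph{arbitrary} alliance to receiver~4'' --- is to pick exactly one alliance $\mathcal{A}_j\neq\mathcal{A}_i$ and add conflict edges from all of its messages to $W_n$, placing $W_n$ in the single sub-alliance $\mathcal{A}_{i,j}$. With that correction, your termination argument and the final appeal to Theorem~2 go through.
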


\begin{figure*}[t]
\centering
\subfigure[Original topology]{\includegraphics[width=0.25\linewidth]{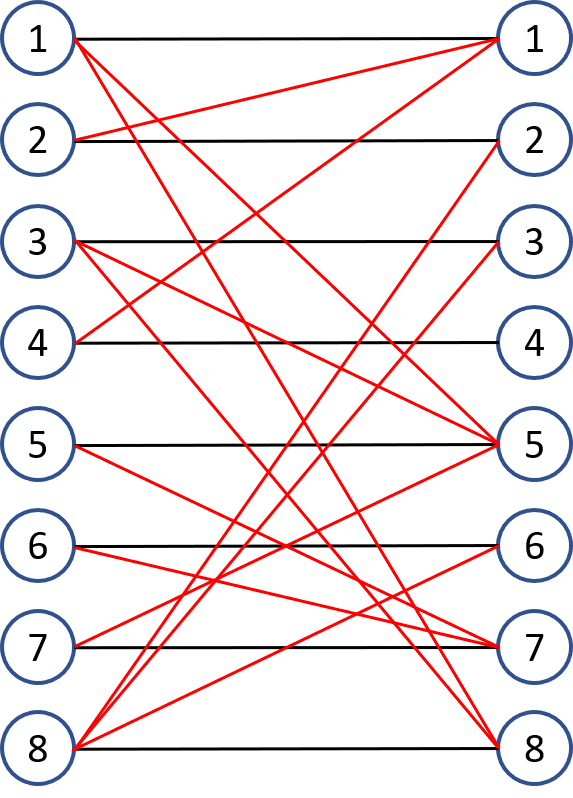}}
\hspace{13pt}
\subfigure[Maximal topology from (a)]{\includegraphics[width=0.25\linewidth]{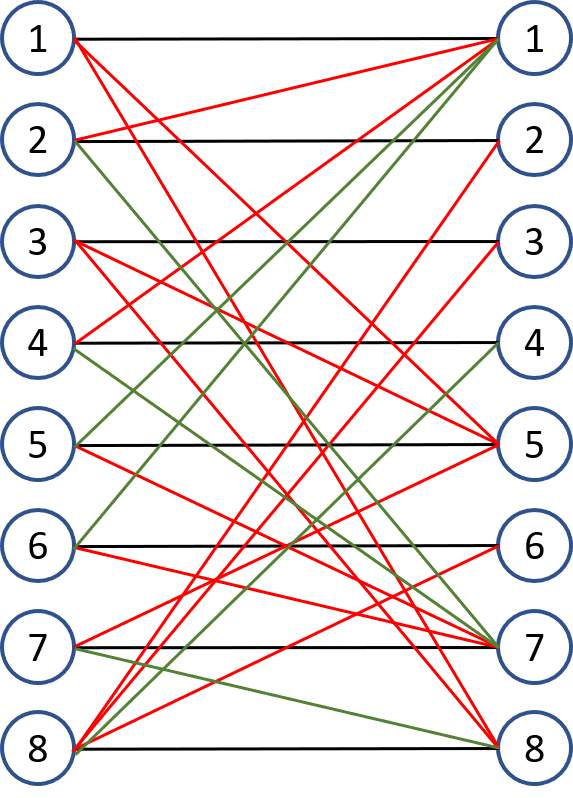}}
\hspace{13pt}
\subfigure[Other maximal topology from (a)]{\includegraphics[width=0.25\linewidth]{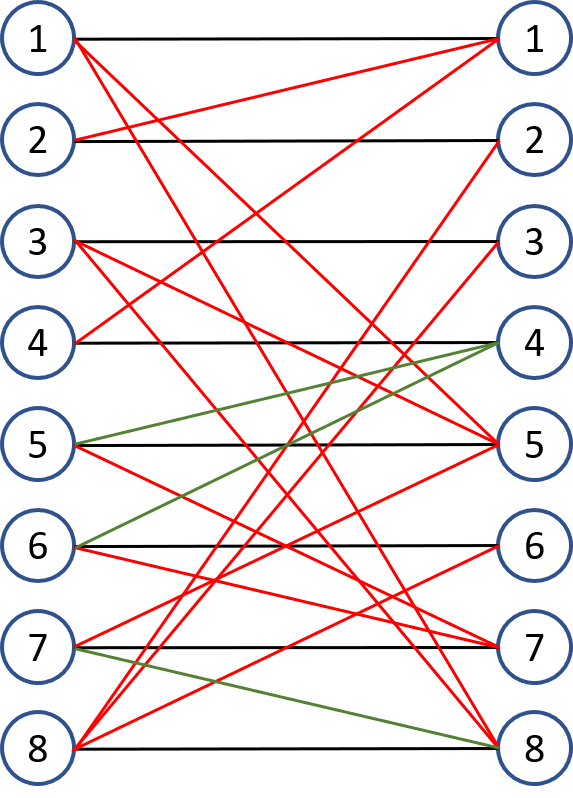}}
\caption{Transformation of non-maximal topology into maximal topologies for 8-user interference channel.}
\label{TF}
\end{figure*}

\begin{example}
A topology for 8-user interference  is given in Fig. \ref{TF}. There are four provisional alliances (i.e, alignment sets) with no internal conflict, $\mathcal{A}_{1}=\{W_1,W_3, W_7\}$ with sub-alliances $\mathcal{A}_{1,2}=\{W_1\}$, $\mathcal{A}_{1,3}=\{W_7\}$, and $\mathcal{A}_{1,4}=\{W_3\}$, $\mathcal{A}_{2}=\{W_2,W_4\}$ with sub-alliance $\mathcal{A}_{2,4}=\{W_2\}$, $\mathcal{A}_{3}=\{W_5,W_6\}$ with sub-alliances $\mathcal{A}_{3,1}=\{W_5\}$ and $\mathcal{A}_{3,4}=\{W_6\}$,  and  $\mathcal{A}_{4}=\mathcal{A}_{4,1}=\{W_8\}$. Since $\mathcal{A}_{1}$ does not follow the cooperative conflict of alliance, the interference link from transmitter 7 to receiver 8 should be added. Also, there is no hostility between $\mathcal{A}_2$ and $\mathcal{A}_3$ and $W_4$ in $\mathcal{A}_2$ is not interfered. For this situation, there are two ways to transform the topology into maximal one. The first one is to merge alliances $\mathcal{A}_2$ and $\mathcal{A}_3$ and add interference links from transmitters in an arbitrary alliance to receiver 4. It is also required to add interference links from transmitters in $\mathcal{A}_2$ to receivers in $\mathcal{A}_{1,3}$ and from transmitters in $\mathcal{A}_3$ to receivers in $\mathcal{A}_{1,2}$. A maximal topology is given in Fig. \ref{TF} (b). Another way is to make hostility between $\mathcal{A}_2$ and $\mathcal{A}_3$ by setting $\mathcal{A}_{2,3}=\{W_4\}$. Other maximal topology is given in Fig. \ref{TF} (c). There are many ways to transform non-maximal topology into maximal topology by adding interference links.
\end{example}

\section{Topology Matrix}
In the previous section, the conditions for maximal topology were specified as the relationship of messages in alliance and relationship among alliances. However, it is hard to determine maximality of topology in the topology and message graph. In this section, the  analysis of topology in the matrix form is proposed to derive an MTM, determine the maximality of topology matrix, and transform non-MTM into MTM.

\subsection{Maximal Topology Matrix}

In this subsection, the  sufficient and necessary conditions of MTM are delivered  based on alliance construction. First, some of definitions related to topology matrix are givne as follows.

\begin{definition}[Alliance block and interference block]
Suppose that the indices of messages in each alliance are ordered consecutively as $\mathcal{A}_{n}=\{W_i , W_{i+1},\cdots,W_{i+\vert \mathcal{A}_{n}\vert-1}\}$. A principal submatrix of $\bfit{T}$ is called  an alliance block of $\mathcal{A}_{n}$ if $\bfit{T}$ satisfies the following conditions:
\begin{enumerate}[label=(\roman*)]
\item The principal submatrix corresponding to $\mathcal{A}_{n}$ is an identity matrix of size $\vert \mathcal{A}_{n}\vert \times \vert \mathcal{A}_{n}\vert$.
\item There exist at least one $j$ in $\bfit{T}$ such that $t_{kj}=1$ for all $k\in \{i,i+1,\cdots,i+\vert \mathcal{A}_n \vert -1\}$ and $W_j\notin \mathcal{A}_{n}$.
\item There does not exist $j$ in $\bfit{T}$ such that $t_{kj}=1$ and $t_{lj}=0$ for some $k,l \in \{i,i+1,\cdots,i+\vert \mathcal{A}_n \vert -1\}$ and $W_j\notin \mathcal{A}_{n}$.

\end{enumerate}

The above $\vert \mathcal{A}_{n} \vert \times 1$ submatrix $[t_{kj}]$ is called an interference block from $\mathcal{A}_{n}$.
\end{definition}

The first condition corresponds to the deconflict of messages in alliance preventing internal conflict and the second and third ones are the cooperative conflict of alliance.

\begin{figure*}[t]
\centering
\subfigure[Topology]{\includegraphics[width=0.18\linewidth]{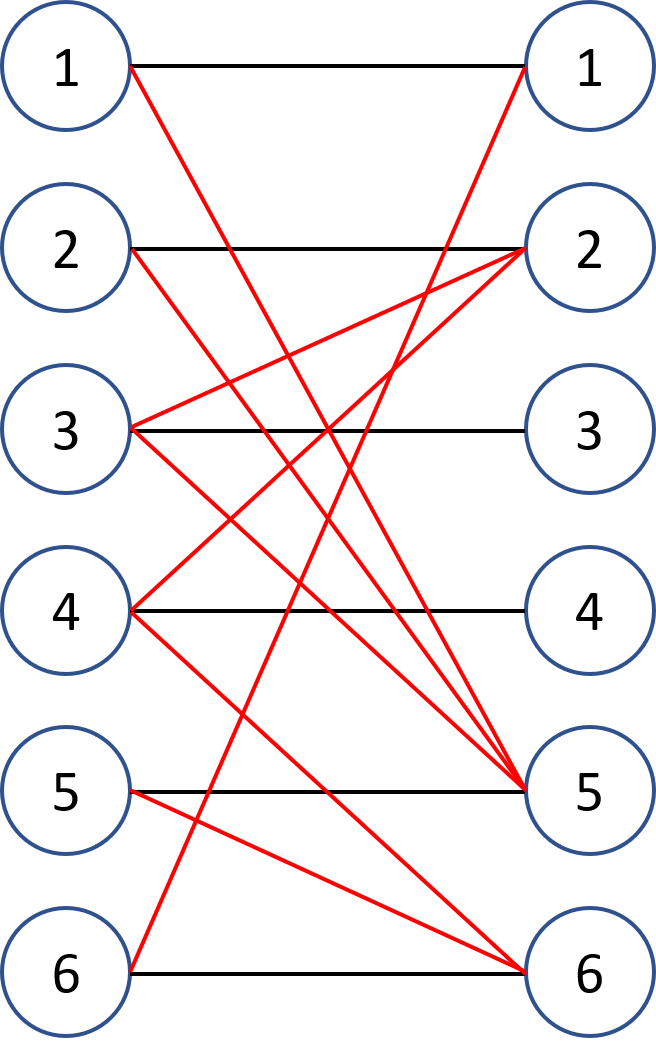}}
\hspace{25pt}
\subfigure[Topology matrix]{\includegraphics[width=0.27\linewidth]{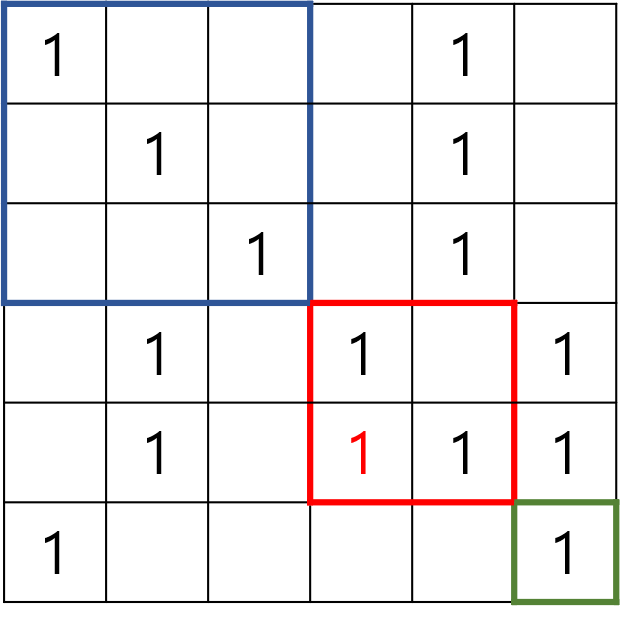}}
\caption{Topology and its topology matrix for 6-user interference channel.}
\label{AB}
\end{figure*}

\begin{example}
Fig. \ref{AB} shows topology and its topology matrix for 6-user interference channel. There are three principal submatrices whose sizes are $3\times3$, $2\times 2$, and $1\times 1$. The $3\times3$ and $1\times1$ square matrices are alliance blocks. But $2\times 2$ one is not due to $t_{5,4}=1$, which represents internal conflict in alignment set $\{W_4, W_5\}$. There is an interference block from the alliance block $\mathcal{A}_{1}=\{W_1,W_2,W_3\}$  at $t_{i,5}$, $i\in\{1,2,3\}$, which represents the cooperative conflict to $W_5$. Thus the above matrix is not MTM.
\end{example}

It is also required to translate the mutually partial hostility in alliance construction into topology matrix for MTM.

\begin{corollary}[MTM]
 Suppose that there are $N$ alliance blocks in a topology matrix and messages in each alliance are ordered consecutively in indices. A topology matrix is MTM if and only if all alliance blocks satisfy following conditions:

 \begin{enumerate}[label=(\roman*)]
\item Each column of the alliance blocks has a single interference block.
\item There exist interference block(s) between any two alliance blocks.
 \end{enumerate}
\end{corollary}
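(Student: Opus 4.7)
The plan is to reduce Corollary 1 to Theorem 2 ($N$-alliance construction) by building a dictionary between the matrix objects of Definition 8 and the graph-theoretic objects of the alliance construction, and then checking that the two named conditions (i)--(ii) are precisely the matrix translation of mutual partial hostility of every pair of alliances.

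First I would establish the correspondence at the level of a single alliance. If $\mathcal{A}_n$ is an alliance and its messages occupy consecutive indices $i,i+1,\dots,i+|\mathcal{A}_n|-1$, then Definition 8(i) (identity submatrix) is exactly the deconflict condition inside $\mathcal{A}_n$, while Definition 8(ii)--(iii) force every column outside the block to be either entirely 1 or entirely 0 on the rows of $\mathcal{A}_n$, with at least one such column being all-ones. This is precisely the cooperative conflict requirement ``the enemy of any message in $\mathcal{A}_n$ is an enemy of every message in $\mathcal{A}_n$.'' Consequently, a $K\times K$ topology matrix (with messages of each alliance ordered consecutively) contains $N$ alliance blocks if and only if the underlying topology decomposes into $N$ alliances in the sense of Section III.

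Next I would translate sub-alliances and mutual partial hostility into block language. The sub-alliance $\mathcal{A}_{n,m}$ corresponds bijectively to the set of column indices $j$ inside the alliance block of $\mathcal{A}_n$ such that the entries of column $j$ restricted to the rows of $\mathcal{A}_m$ form an all-ones interference block. The partition requirements $\bigcup_{m\neq n}\mathcal{A}_{n,m}=\mathcal{A}_n$ and $\mathcal{A}_{n,m}\cap\mathcal{A}_{n,k}=\emptyset$ -- namely, every message of $\mathcal{A}_n$ lies in exactly one sub-alliance -- become, line for line, condition (i): each column of the alliance block carries exactly one interference block. Likewise, the mutual partial hostility condition $|\mathcal{A}_{n,m}|+|\mathcal{A}_{m,n}|\geq 1$ from Theorem 3(iii) becomes condition (ii): at least one interference block exists between the two alliance blocks.

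With the dictionary in hand, the biconditional of Corollary 1 is immediate from Theorem 2, since maximality is equivalent to pairwise mutual partial hostility of alliances, which is equivalent to (i) together with (ii). The main point requiring care is that the ``indices ordered consecutively'' hypothesis is a genuine normalization rather than a loss of generality: I would note that applying the same permutation to the rows and columns of $\bfit{T}$ is just a relabelling of transceivers and preserves both maximality and the alliance structure, so any maximal topology with $N$ alliances can be brought to the canonical block form without affecting the argument. The residual bookkeeping -- verifying that conditions (i) and (ii) jointly exclude the degenerate case $\mathcal{A}_{n,m}=\mathcal{A}_{m,n}=\emptyset$ that Lemma 3 rules out -- then reduces to observing that condition (ii) requires at least one interference block between every pair of alliance blocks, which is exactly what Lemma 3 forbids from being absent.
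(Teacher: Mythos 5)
Your proof is correct and takes essentially the same route the paper intends: the paper omits the proof of this corollary, asserting only that conditions (i)--(ii) ``correspond to the mutually partial hostility for the alliance construction,'' and your dictionary (identity principal submatrix $\leftrightarrow$ deconflict, all-or-nothing columns $\leftrightarrow$ cooperative conflict, interference blocks $\leftrightarrow$ sub-alliances, condition (i) $\leftrightarrow$ the sub-alliance partition of each alliance, condition (ii) $\leftrightarrow$ $\vert\mathcal{A}_{i,j}\vert+\vert\mathcal{A}_{j,i}\vert\geq 1$) followed by the reduction to Theorem 2 is exactly that correspondence made explicit. No gaps.
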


The first condition ensures that there is no message which is not interfered and every message is interfered from all messages in an alliance. The second condition ensures that at least one of sub-alliances $\mathcal{A}_{i,j}$ and  $\mathcal{A}_{j,i}$ are not empty-set for any $\mathcal{A}_i$ and $\mathcal{A}_{j}$. In fact, the above two conditions correspond to the mutually partial hostility for the alliance construction. Thus we omit the proof of Corollary 2.

\begin{figure*}[t]
\centering
\subfigure[Maximal topology]{\includegraphics[width=0.3\linewidth]{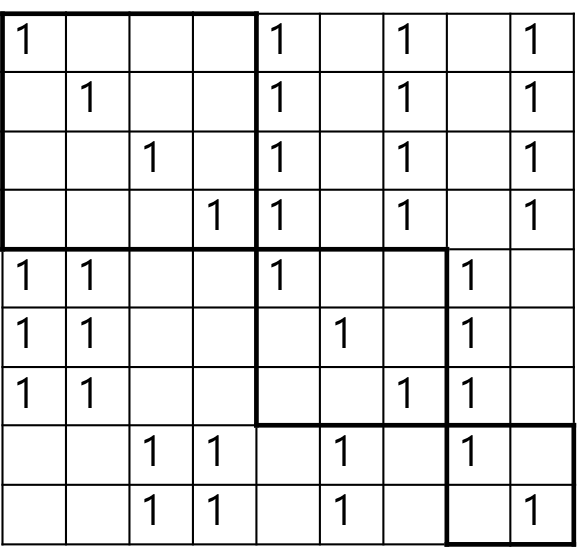}}
\hspace{25pt}
\subfigure[Non-maximal topology ]{\includegraphics[width=0.3\linewidth]{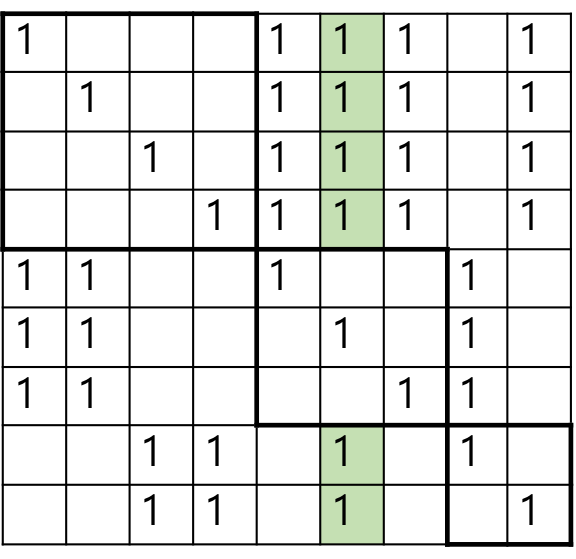}}
\caption{Topology matrices for 9-user interference channel.}
\label{MTMF}
\end{figure*}

\begin{example}
In Fig. \ref{MTMF}, there are two topology matrices for 9-user interference channel. The topology matrix in Fig. 7 (a) satisfies all conditions of MTM. On the other hand, the topology matrix in Fig. 7 (b) is not MTM because there is a message $W_6$ in $\mathcal{A}_2$ which has two interference blocks from $\mathcal{A}_1=\{W_1,W_2,W_3,W_4\}$ and  $\mathcal{A}_3=\{W_8,W_9\}$.
\end{example}

\subsection{Discriminant and Transformation of MTM}

In this subsection, we propose the discriminant of MTM in matrix perspective. The interpretation of the discriminant of maximal topology into the topology matrix is needed because the characteristics of maximal topology are more easily analyzed in topology matrix than sub-alliance graph. We assume that the messages that belong to the same alliance are ordered consecutively in indices. However, the indices of messages in a given topology matrix are always not well sorted and the alliance and interference blocks are not easily discerned. Thus, the permutation of messages for consecutive ordering of indices for each alliance should precede the analysis of maximality of topology in matrix perspective.

\begin{definition}[Permutation of matrix]
Let $[\bfit{A}]_{i \leftrightarrow j}$ represent column and row permutations of matrix indices that swap the $i$th row and column with the $j$th ones, which corresponds to exchanging indices of two messages $W_i$ and $W_j$.
\end{definition}


\begin{proposition}[Discriminant of MTM]
The maximality of topology matrix is determined as follows:
\begin{enumerate}[label=(\roman*)]
\item Construct all tentative alliance blocks by permutating matrix indices in a such way that any $i$th and $j$th rows and $i$th and $j$th columns are simultaneously ordered consecutively if $t_{i,k}=t_{j,k}=1$.
\item Investigate whether all tentative alliance blocks are alliance blocks or not. If not, it is not MTM.
\item If yes, investigate whether all alliance blocks  follow two conditions in Corollary 2 or not.
\item If yes, it is an MTM.
\end{enumerate}

\end{proposition}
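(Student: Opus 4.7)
The plan is to reduce the correctness of the discriminant procedure to Corollary~2, which already characterizes MTM by two structural conditions on alliance blocks. Since Corollary~2 presumes that messages in each alliance appear consecutively in indices, the task is to justify that step~(i) achieves exactly that consecutive ordering whenever a valid alliance structure exists, and that steps~(ii) and (iii) detect precisely the two ways in which maximality can fail: a violation of the alliance block conditions in Definition~7, and a violation of the mutually partial hostility conditions in Corollary~2.

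First I would analyze step~(i). By Definition~1, two messages $W_i$ and $W_j$ lie in the same alignment set whenever there is a chain $i=i_0,i_1,\ldots,i_s=j$ such that for each consecutive pair $(i_{r-1},i_r)$ some index $k_r$ with $k_r\notin\{i_{r-1},i_r\}$ satisfies $t_{i_{r-1},k_r}=t_{i_r,k_r}=1$. The permutation rule in step~(i) brings every such adjacent pair of rows and columns into consecutive positions, so iterating it gathers each connected component of the alignment graph into a single contiguous block. Hence, after step~(i), the candidate diagonal blocks coincide with the principal submatrices indexed by the alignment sets.

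Next I would verify step~(ii). By Definition~7, being an alliance block requires (a) an identity principal submatrix, i.e., no internal conflict, and (b) the cooperative conflict property, i.e., every external column either hits every row of the block or none of them. If some tentative block violates~(a), the corresponding alignment set carries an internal conflict, which already disqualifies the topology from being maximal by Definition~5. If (a) holds but (b) fails, then by the argument of Lemma~1 the missing interference entries can be added without creating any internal conflict, producing a strictly larger topology with the same alignment sets; so the starting topology is not maximal. Either way the rejection in step~(ii) is correct.

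Finally, for step~(iii), once every tentative block is a genuine alliance block, the tentative alliances coincide with the actual alliances of the underlying message graph, and Corollary~2 applies verbatim: maximality is equivalent to each column of every alliance block containing exactly one interference block and to every pair of distinct alliance blocks being connected by at least one interference block. Step~(iii) checks precisely these, so the procedure outputs MTM if and only if the topology matrix is maximal. The main obstacle in executing this plan cleanly will be step~(i): one must argue that the pairwise grouping rule, applied iteratively, can always be realized as a single row-column permutation whose diagonal blocks are exactly the alignment-set equivalence classes. This follows by processing edges one connected component at a time, but the wording of Definition~8 as a single transposition demands a careful inductive argument that composing such transpositions preserves all previously grouped blocks.
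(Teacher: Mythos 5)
The paper states Proposition~4 without any proof, so there is nothing to compare against line by line; your overall strategy---recover the alignment sets in step~(i), justify the rejections in step~(ii) via Definition~5 and the argument of Lemma~1, and then hand the verified alliance blocks to Corollary~2 in step~(iii)---is the natural reduction and is surely what the authors intend, and your handling of steps~(ii) and~(iii) is fine. The genuine gap is in step~(i). You assert that, ``by Definition~1,'' the grouping condition $t_{i,k}=t_{j,k}=1$ is the matrix form of an alignment edge between $W_i$ and $W_j$. Under Definition~6 ($t_{ij}=1$ iff transmitter $j$ is linked to receiver $i$), an alignment edge between $W_i$ and $W_j$ requires a \emph{receiver} $k$ that hears both transmitters $i$ and $j$, i.e.\ $t_{k,i}=t_{k,j}=1$---the transpose of the condition you use. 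The condition as written says instead that receivers $i$ and $j$ share the common interferer $W_k$, and the transitive closure of that relation is in general a different partition. Concretely, in the $3$-alliance topology of Example~2 the alliances are $\{W_1,W_2\}$, $\{W_3,W_4\}$, $\{W_5,W_6\}$, yet $t_{3,1}=t_{5,1}=1$ (both $W_3\in\mathcal{A}_{2,1}$ and $W_5\in\mathcal{A}_{3,1}$ are interfered by transmitter $1$), so the literal rule merges messages from two distinct alliances and produces tentative blocks $\{W_3,W_5\}$, $\{W_1,W_6\}$, $\{W_2,W_4\}$; your argument then feeds the wrong blocks to Corollary~2. This ambiguity originates in the paper's own indexing, but a proof cannot simply assert the equivalence with Definition~1: you must either work with the transposed condition $t_{k,i}=t_{k,j}=1$ (for $k\notin\{i,j\}$, an exclusion that is also needed to keep the diagonal entries from merging every conflicting pair) and show its transitive closure is exactly the partition into alignment sets, or prove that the two groupings coincide on the matrices under consideration, which they do not in general.

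By contrast, the difficulty you single out at the end---realizing the iterated pairwise grouping as one simultaneous row--column permutation---is routine: any partition of $\{1,\dots,K\}$ can be made contiguous by a single permutation applied identically to rows and columns, and such a permutation maps principal submatrices to principal submatrices, so no delicate induction over transpositions is required. The substantive work you still owe is the correctness of the grouping relation itself, not its realizability as a permutation.
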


\begin{figure*}[t]
\centering
\subfigure[Original matrix]{\includegraphics[width=0.27\linewidth]{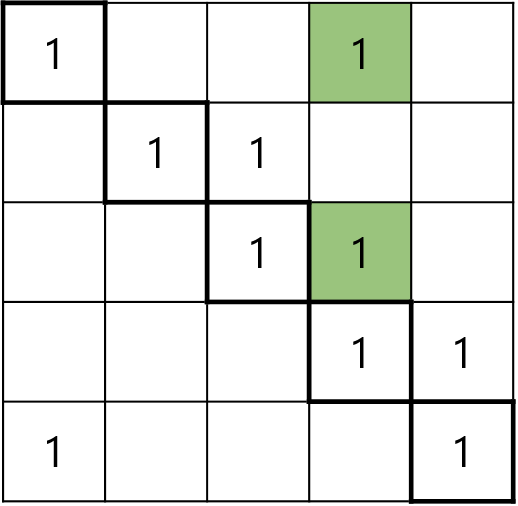}}
\hspace{15pt}
\subfigure[Matrix with proper tentative alliance blocks]{\includegraphics[width=0.27\linewidth]{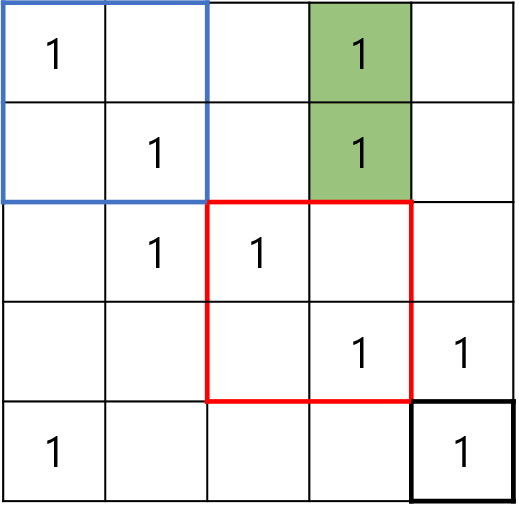}}
\hspace{15pt}
\subfigure[Matrix after transformation]{\includegraphics[width=0.27\linewidth]{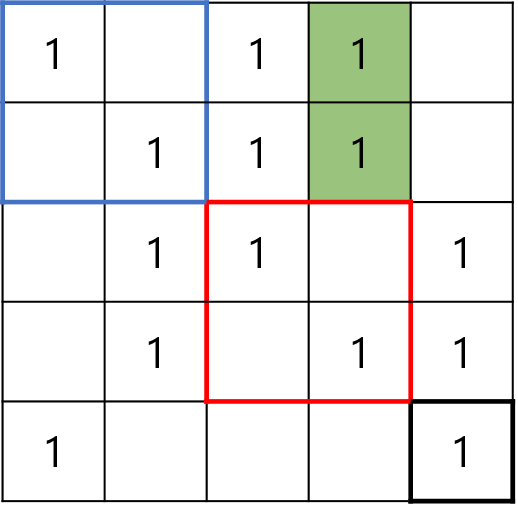}}
\caption{Topology matrices for 5-user interference channel.}
\end{figure*}

\begin{example}
The topology matrix in Fig. 8 represents a 5-user interference channel. However, if we swap message indices 2 and 3 in both columns and rows, $\mathcal{A}_{1}=\{W_1\}$ and $\mathcal{A}_{3}=\{W_3\}$ can be combined into a single alliance block because $t_{1,4}=t_{3,4}=1$. Even though message indices are reordered, this topology matrix is still not an MTM, because every column in the matrix does not have one interference block. Thus it is possible to add more interference links while maintaining the DoF performance. It is not trivial to determine which empty spaces should be filled with element 1 in the topology matrix. We propose how to make MTM from non-MTM by filling some empty spaces with element 1 as in the following proposition.
\end{example}

\begin{proposition}[Transformation of non-MTM into MTM]
First, check whether each principal submatrix is an identity matrix or not. If yes, the transformation can be stated as:

\begin{enumerate}[label=(\roman*)]
\item Insert element 1 to the topology matrix in a such way that incomplete interference blocks do not exist.
\item If two alliance blocks $\mathcal{A}_n$ and $\mathcal{A}_m$ do not have any corresponding interference block, there are two ways to transform the topology matrix as:
\begin{enumerate}
    \item Merge them by permutating matrix indices in a such way that all indices in $\mathcal{A}_n$ and $\mathcal{A}_m$ are rearranged consecutively.
    
    \item If there exist the $i$th column with no interference block for message $W_i\in \mathcal{A}_{n}$ or $W_i\in \mathcal{A}_{m}$, add corresponding interference block to the $i$th column of $\mathcal{A}_{n}$ or $\mathcal{A}_{m}$.

\end{enumerate}

\item If there still exists a column with no interference block, add an arbitrary interference block to the column.
\end{enumerate}
\end{proposition}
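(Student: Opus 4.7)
The plan is to prove this proposition by translating each step of the transformation into the language of alignment sets and alliances, invoking Proposition 2 (the graph-form transformation) step by step, and then checking that the final matrix satisfies the two conditions of Corollary 2. First, I would interpret the precondition ``each principal submatrix is an identity matrix'' as the matrix-level statement that every alignment set is free of internal conflict: the diagonal $1$s encode the direct links $W_i \leftrightarrow W_i$, while the off-diagonal $0$s inside each principal submatrix rule out a conflict edge between two messages that already share an alignment edge. This is exactly the hypothesis of Proposition 2, so the graph-level transformation applies.

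Next I would walk through the three steps in order. Step (i) fills in every partial column so no incomplete interference block remains; by Definition 9 this is precisely the matrix realization of adding conflict edges to enforce the cooperative conflict of alliance, thereby promoting each tentative alliance block into a genuine alliance block. After step (i), every principal alliance block is an identity matrix together with length-$\lvert\mathcal{A}_n\rvert$ all-ones interference blocks, so the first condition of Corollary 2 (each column of an alliance block carries a single full interference block) will hold as soon as the inter-alliance columns are treated.

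Step (ii) handles the case in which two alliance blocks $\mathcal{A}_n$ and $\mathcal{A}_m$ have no interference block between them, which is the matrix form of ``no hostility'' in Proposition 2. Sub-step (a) corresponds to the merging branch: the simultaneous row-column permutation reorders indices so that $\mathcal{A}_n \cup \mathcal{A}_m$ is consecutive, and the previously existing interference blocks into $\mathcal{A}_{k,n}$ and $\mathcal{A}_{k,m}$ columns then combine into the interference blocks of the merged alliance after one further application of step (i). Sub-step (b) is the matrix analogue of declaring a non-empty sub-alliance $\mathcal{A}_{n,m}$ or $\mathcal{A}_{m,n}$ whenever a column of $\mathcal{A}_n$ or $\mathcal{A}_m$ still carries no interference block. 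Step (iii) disposes of any remaining such column by inserting an interference block from an arbitrary other alliance, matching the last step of Proposition 2. Applying Corollary 2 to the resulting matrix then certifies that it is an MTM.

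The main obstacle, and the step that needs the most care, is sub-step (ii)(a): one must verify that after the simultaneous row-column permutation and the completion of interference blocks, the enlarged principal submatrix is still an identity matrix, i.e.\ no off-diagonal $1$ appears inside it. This reduces to checking that the entries $t_{ij}$ with $W_i\in\mathcal{A}_n$ and $W_j\in\mathcal{A}_m$ (and vice versa) were zero before the merge, which is exactly the ``no hostility'' assumption that triggered step (ii) in the first place. Once this is confirmed, every column of every alliance block carries a single interference block by steps (i), (ii)(b) and (iii), and every pair of alliance blocks is joined by at least one interference block by step (ii); these are precisely the two conditions of Corollary 2, so the transformed matrix is an MTM.
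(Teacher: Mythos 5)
The paper states this proposition without any proof at all (it is followed only by a remark on non-uniqueness of the transformation), so there is no authorial argument to compare against; your proposal supplies exactly the justification the paper leaves implicit. Your route --- read the precondition ``each principal submatrix is an identity matrix'' as ``no internal conflict in any alignment set,'' map each of steps (i)--(iii) onto the corresponding step of Proposition 2 via the matrix--graph dictionary of Definition 9, and then certify the output against the two conditions of Corollary 2 --- is the same correspondence the paper itself invokes when it omits the proofs of Corollary 2 and Proposition 2, so this is faithful to the intended argument and is correct. You also correctly isolate the one point that genuinely needs checking, namely that the merged principal submatrix in sub-step (ii)(a) remains an identity matrix. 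One small refinement would make that check airtight: the trigger for step (ii), ``no corresponding interference block between $\mathcal{A}_n$ and $\mathcal{A}_m$,'' only implies that \emph{all} entries $t_{ij}$ with $W_i\in\mathcal{A}_n$, $W_j\in\mathcal{A}_m$ (and vice versa) vanish \emph{after} step (i) has been performed, since step (i) completes any partial column of $1$s into a full interference block; so the order of the steps matters and should be stated as part of the argument. It is also worth noting explicitly why step (i) cannot create a column with two interference blocks from distinct alliances (which would violate condition (i) of Corollary 2): if a receiver heard transmitters from two different tentative alliance blocks, those blocks would already belong to a single alignment set by Definition 1 and would have been merged when the tentative alliance blocks were formed. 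With these two observations added, your proof is complete.
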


Proposition 5 shows that transformation is not unique for a given topology matrix. There are many ways to merge provisional alliance blocks into single alliance block. Also for the column with no interference block, there are many ways to put an arbitrary interference block into the column.

\section{Generalized Alliance Construction}

\vspace{10pt}
\subsection{Generalized Alliance Construction}
\vspace{10pt}

Until now, we focus on alliance construction for maximal topology and analyze characteristics of alliance construction and its topology matrix. But in TIM, it is also possible to achieve DoF less than 1/2, but it is more difficult to analyze it. In this section, we propose alliance construction for DoF less than $1/2$ by modifying the definition of  sub-alliance and derive its topology.


\begin{definition}[Generalized sub-alliance]
The alliance $\mathcal{A}_{i}$ is partitioned into $n_i$ generalized sub-alliances $\mathcal{A}_{i,{{\mathcal{E}}_{i}^{k}}}$, where  $\mathcal{E}_{i}^{k}$ is the set of indices of alliances whose messages give interference to all messages in $\mathcal{A}_{i,{{\mathcal{E}}_{i}^{k}}}$, $\cup_{k=1}^{n_i}\mathcal{A}_{i,{{\mathcal{E}}_{i}^{k}}}=\mathcal{A}_{i}$ and $\mathcal{E}_{i}^{k}$ is not a subset of the others but $\mathcal{E}_{i}^{k_1}\cap \mathcal{E}_{i}^{k_2}\neq \emptyset$.

\end{definition}

\begin{definition}[Multiple partial hostility]
The alliances $\mathcal{A}_{i}$ and $\mathcal{A}_{j}$ are multiple partial hostile if $j\in \cup_{k=1}^{n_i}\mathcal{E}_{i}^{k}$ and/or $i\in \cup_{k=1}^{n_j}\mathcal{E}_{j}^{k}$. 
\end{definition}

\begin{theorem}[Generalized symmetric DoF]
Suppose that there are alliances with generalized sub-alliances and multiple partial hostility. 
Let $E_M$ be $\max_{i,k}{\left| \mathcal{E}_{i}^{k} \right|}$ in TIM of the interference channel. The achievable symmetric DoF using the proposed linear beamforming scheme is 
\begin{equation}
    d_{sym}=\frac{1}{E_M+1}. 
\end{equation}
\end{theorem}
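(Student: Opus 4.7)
The plan is to generalize the beamforming construction from Subsection III.B by increasing the number of symbol extensions from $2$ to $E_M+1$ while still assigning a single beamforming vector per alliance. Concretely, I would use $E_M+1$ time slots and give every message in alliance $\mathcal{A}_n$ the same vector $\bm{V}_n \in \mathbb{R}^{(E_M+1)\times 1}$, chosen from a family (for instance the Vandermonde family $\bm{V}_n = [1,\alpha_n,\alpha_n^2,\ldots,\alpha_n^{E_M}]^{T}$ with distinct $\alpha_n$) in which any $E_M+1$ of the $\bm{V}_m$ are linearly independent. This extends the construction used in the proof of Theorem~1, where $N=2$ pairwise independent $2\times 1$ vectors were enough.

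Next I would analyze the received signal at an arbitrary receiver $i$ whose desired message $W_i$ lies in the generalized sub-alliance $\mathcal{A}_{n,\mathcal{E}_n^k}$. By Definition of the generalized sub-alliance, the only interferers reaching receiver $i$ are messages from alliances whose indices belong to $\mathcal{E}_n^k$. Because every message in alliance $\mathcal{A}_m$ is beamformed along the common direction $\bm{V}_m$, all interference originating from $\mathcal{A}_m$ collapses into the single direction $\bm{V}_m$ in the $(E_M+1)$-dimensional receive space. Hence the received vector lies in $\mathrm{span}\{\bm{V}_n\}\cup\{\bm{V}_m : m\in \mathcal{E}_n^k\}$, a union containing at most $1+|\mathcal{E}_n^k|\leq 1+E_M = E_M+1$ vectors, mirroring the alignment argument in equation~(\ref{bvd}) but with $E_M$ interfering directions rather than one.

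To finish, I would apply the linear independence property of the chosen family: since $|\mathcal{E}_n^k|\leq E_M$, the set $\{\bm{V}_n\}\cup\{\bm{V}_m: m\in \mathcal{E}_n^k\}$ contains at most $E_M+1$ distinct indices and is therefore linearly independent. In particular, $\bm{V}_n$ is not in the span of the interference subspace, so receiver $i$ can project orthogonally to $\mathrm{span}\{\bm{V}_m: m\in \mathcal{E}_n^k\}$, zero-force the aligned interference, and decode $W_i$ along $\bm{V}_n$. Every user therefore recovers one symbol in $E_M+1$ time slots, so the achievable symmetric DoF is $1/(E_M+1)$, as claimed.

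The main obstacle I expect is not the per-receiver decoding argument itself, which is a direct generalization of the $N=2$ case, but rather the uniform choice of the $\bm{V}_m$'s that simultaneously works at all $K$ receivers. The Vandermonde (or generic) choice is what guarantees that the inequality $|\mathcal{E}_n^k|\leq E_M$ translates into an actually nonsingular system at every receiver; without the "any $E_M+1$ of the vectors are linearly independent" property, some receiver could find its desired direction inside the interference span even when $|\mathcal{E}_n^k|<E_M$. Once this independence is fixed by the Vandermonde construction, the DoF achievability is immediate and no converse is needed because the statement only claims achievability.
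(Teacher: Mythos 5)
Your achievability argument is essentially identical to the paper's: $(E_M+1)$ symbol extensions, one common beamforming vector per alliance chosen so that any $E_M+1$ of the $N$ vectors are linearly independent (the paper does not name a construction, but your Vandermonde choice is a valid instantiation), interference from each alliance collapsing onto a single direction, and zero-forcing at each receiver since at most $|\mathcal{E}_n^k|\leq E_M$ interfering directions appear alongside the desired one. That half is correct and complete.

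The gap is in your final sentence: you assert that ``no converse is needed because the statement only claims achievability.'' The theorem asserts an equality, $d_{sym}=\frac{1}{E_M+1}$, and the paper's proof devotes a full second part to the matching upper bound. Specifically, it invokes Theorem 5 of \cite{TIM}, which bounds the symmetric DoF by $1/\Psi$ where $\Psi$ is the maximum cardinality of an acyclic subset of messages in the demand graph, and then argues that under the generalized alliance construction one can exhibit an acyclic set of size $E_M+1$: take a message $W_p$ in a sub-alliance $\mathcal{A}_{i,\mathcal{E}_i^k}$ with $|\mathcal{E}_i^k|=E_M$ together with one suitably chosen message from each of the $E_M$ alliances indexed by $\mathcal{E}_i^k$, using their mutual multiple partial hostility to rule out cycles. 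Without this step you have only shown $d_{sym}\geq \frac{1}{E_M+1}$; in principle a cleverer (possibly nonlinear, or differently aligned) scheme could beat $\frac{1}{E_M+1}$ on the same topology, and it is exactly the acyclic-subset bound that excludes this. You should either supply that converse or weaken the claim to an inequality.
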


\begin{proof}

(Achievability) Suppose that there are $N$ alliances with generalized sub-alliances and multiple partial hostility for $K$-user interference channel and we use $(E_M+1)$ time extensions for beamforming vectors. It is possible to construct $N$ beamforming vectors allotted to each alliance, where any $E_M+1$ vectors in $N$ vectors are linearly independent. Let $\bfit{V}_{n}$ be an $(E_M+1)\times1$ beamforming vector for messages in $\mathcal{A}_{n}$, $n\in\{1,2,\cdots,N\}$. There is no conflict among messages in each alliance and each message in $\mathcal{A}_{n,\mathcal{E}_{n}^{k}}$ are interfered by all messages in all alliances $\mathcal{A}_m$, $m\in \mathcal{E}_n^{k}$. 

Consider the $i$th receiver that wants message $W_i$, which belongs to $\mathcal{A}_{n,\mathcal{E}_{n}^{j}}$  after the alliance construction. Then the $(E_M+1) \times1$ received signal vector at receiver $i$ for $(E_M+1)$ time slots is given as
\begin{equation}\label{bvd}
    \bm{Y}_{i}={\textit{h}_{ii}\bm{V_{n}}W_{i}}+\sum_{m\in \mathcal{E}_n^k}  \sum_{W_j\in \mathcal{A}_{m}}{\textit{h}_{ij}\bm{V_{m}}W_{j}}+\bm{{Z}_{i}}.
\end{equation}
Since there are at most $E_M$ alliances with indices in $\mathcal{E}_{n}^{k}$ and any $E_M+1$ beamforming vectors are linearly independent, receiver $i$ can null the aligned interference signals and recover $W_i$. In the same way, every receiver can decode its desired message by only $E_M+1$ time extensions, which means that the interference channel achieves DoF 1/$(E_M+1)$ in TIM. 

(Upperbound) According to Theorem 5 in \cite{TIM}, the symmetric DoF in TIM is bounded as
\begin{equation}
d_{sym} \leq \frac{1}{\Psi},
\end{equation}

\noindent where $\Psi$ is the maximum cardinality of an acyclic subset of messages in demand graph. We just show that the maximum cardinality of an acyclic subset of messages is equal to $E_M+1$ in the proposed generalized alliance construction. Consider a message $W_p$ that belongs to a sub-alliance $\mathcal{A}_{i,\mathcal{E}_{i}^{k}}$, where $\left| \mathcal{E}_{i}^{k} \right|=E_M$. Then, there is no edge from receiver $p$ to transmitters of messages that belong to $E_M$ alliances with indices in $\mathcal{E}_{i}^{k}$ in demand graph. Also, the alliances with indices in $\mathcal{E}_{i}^{k}$ are multiple partial hostility, that is, for any $\mathcal{A}_n$ and $\mathcal{A}_m$ with indices in $\mathcal{E}_{i}^{k}$, there exist sub-alliances $\mathcal{A}_{n,\mathcal{E}_{n}^{\alpha_n}}$ and/or $\mathcal{A}_{m,\mathcal{E}_{m}^{\alpha_m}}$ where $m \in \mathcal{E}_{n}^{\alpha_n}$ and/or $n \in \mathcal{E}_{m}^{\alpha_m}$. Thus, if we choose $E_M$ tuple messages associated with $E_M$ alliances with indices in $\mathcal{E}_{i}^{k}$ having multiple partial hositliy with each other, there is no cycle for $W_p$ and $E_M$ messages in demand graph, which means that $\Psi=E_M+1$.
\end{proof}

The symmetric DoF achieved by linear beamforming scheme is bounded by the maximum number of interfering alliances for all generalized sub-alliances. Note that the interference channel can achieve the optimal symmetric DoF when each message of each sub-alliance in the interference channel are interfered by all messages from a single alliance, that is, $E_M=1$, which results in $d_{sym}=1/2$. 

\begin{corollary}[Generalized symmetric DoF]
Suppose that there are alliances with generalized sub-alliances and multiple partial hostility in TIM of the interference channel. The topology derived from the generalized alliance construction is maximal for DoF $1/(E_M+1)$ if $\left| \mathcal{E}_{i}^{k} \right|=E_M$ for all $1\leq i\leq N$ and $1 \leq k \leq n_i$.
\end{corollary}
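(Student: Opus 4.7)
The plan is to invoke Theorem 3 for the achievability half, and then argue maximality by a case analysis on an arbitrarily added interference link, using the uniformity hypothesis $|\mathcal{E}_i^k|=E_M$ to deduce that any such addition forces the acyclic-subset bound $\Psi$ strictly above $E_M+1$. Since Theorem 3 already gives $d_{sym}=1/(E_M+1)$ together with a matching upper bound $1/\Psi$ with $\Psi=E_M+1$, the only remaining task is to show that adjoining any link currently absent from the topology either creates an internal conflict (breaking the alliance-based scheme) or strictly decreases the attainable symmetric DoF below $1/(E_M+1)$.

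Accordingly, I would fix an arbitrary candidate link from transmitter $j$ to receiver $i$, write $W_i\in\mathcal{A}_{p,\mathcal{E}_p^k}$ and $W_j\in\mathcal{A}_q$, and split on whether $p=q$. If $p=q$, the new link adds a conflict edge within the alliance $\mathcal{A}_p$, directly violating the deconflict requirement and producing an internal conflict in that alignment set. If $p\neq q$, then because the link is currently absent, the generalized sub-alliance definition forces $q\notin\mathcal{E}_p^k$ — otherwise every transmitter in $\mathcal{A}_q$ would already interfere $W_i$. Adding the link therefore enlarges the set of alliances interfering $W_i$ from $\mathcal{E}_p^k$, of size $E_M$, to $\mathcal{E}_p^k\cup\{q\}$, of size $E_M+1$.

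The hard step is turning this local enlargement into a global DoF drop. My plan is to reuse the acyclic-subset construction from the upper-bound half of Theorem 3: since every pair of alliances in the construction is in multiple partial hostility, one can select one representative message from each of the $E_M+1$ alliances indexed by $\mathcal{E}_p^k\cup\{q\}$ so that, together with $W_i$, the resulting $(E_M+2)$-tuple is acyclic in the demand graph of the enlarged topology. The main obstacle is the consistent choice of representatives — concretely, ordering the $E_M+1$ interfering alliances and picking representatives so that each one is not interfered by those coming later in the order; this is exactly the book-keeping already carried out in Theorem 3's proof, and since adding an interference link never removes pre-existing multiple partial hostility among the original alliances, the same construction transfers verbatim. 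The resulting bound $\Psi\geq E_M+2$ forces $d_{sym}\leq 1/(E_M+2)<1/(E_M+1)$ in the modified topology, contradicting the preservation of the original DoF and thereby establishing maximality.
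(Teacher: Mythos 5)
Your overall strategy coincides with the paper's: achievability is inherited from Theorem 3, and maximality is argued by showing that any added interference link pushes $\max_{i,k}\left|\mathcal{E}_{i}^{k}\right|$ (equivalently the acyclic-subset bound $\Psi$) up to $E_M+2$, hence the symmetric DoF down to $1/(E_M+2)$. The paper's own proof is essentially a two-line assertion of exactly this; your $p\neq q$ branch supplies detail the paper omits (why a single added cross-alliance link already yields an acyclic $(E_M+2)$-tuple, by taking $W_j$ itself as the representative of $\mathcal{A}_q$ so that receiver $i$ has no outgoing demand edge to any representative), which is a genuine improvement in rigor over the printed argument.

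The one place where your argument has a real gap --- one the paper shares by silently skipping the case --- is the $p=q$ branch. ``Internal conflict'' is the correct obstruction only for the optimal DoF $1/2$ (Theorem 4 of [6]); for $E_M\geq 2$ a topology whose alignment set contains an internal conflict can still achieve $1/(E_M+1)$, so observing that a within-alliance link violates deconflict and breaks the particular beamforming scheme does not establish that the symmetric DoF of the perturbed topology falls below $1/(E_M+1)$. Note also that such a link changes no $\mathcal{E}_{i}^{k}$, so the paper's one-line claim that any added link makes $\max_{i,k}\left|\mathcal{E}_{i}^{k}\right|=E_M+1$ does not cover it either. To close this branch you would need to run the acyclic-subset construction here as well, e.g.\ take $\{W_i,W_j\}$ together with one representative from each alliance in $\mathcal{E}_p^k$ and verify acyclicity of the resulting $(E_M+2)$-tuple in the demand graph --- which is not automatic, since receiver $j$ need not hear the alliances in $\mathcal{E}_p^k$ and the reverse demand edges must be checked. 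As it stands, your case split correctly identifies the two situations but only fully disposes of one of them.
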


\begin{proof}
Since $\left| \mathcal{E}_{i}^{k} \right|=E_M$ for all $1\leq i\leq N$, if we add any interference link into the topology, $\max_{i,k}{\left| \mathcal{E}_{i}^{k} \right|}=E_M+1$ and the symmetric DoF in TIM, $d_{sym}=1/(E_M+2)$. Thus, the topology is maximal for DoF $1/(E_M+1)$.
\end{proof}

\subsection{Topology Matrix for Generalized Alliance Construction}

Generalized alliance encompasses interference with not only optimal DoF but also non-optimal DoF by generalizing sub-alliances and thus it is necessary to redefine MTM because the previous MTM only considers interference channels achieving the optimal DoF $1/2$. A maximal topology matrix for interference channels achieving symmetric DoF up to $1/n$ is called  MTM for DoF $1/n, n\geq 3$.

\begin{figure*}[t]

\centering
\subfigure[Non-MTM]{\includegraphics[width=0.25\linewidth]{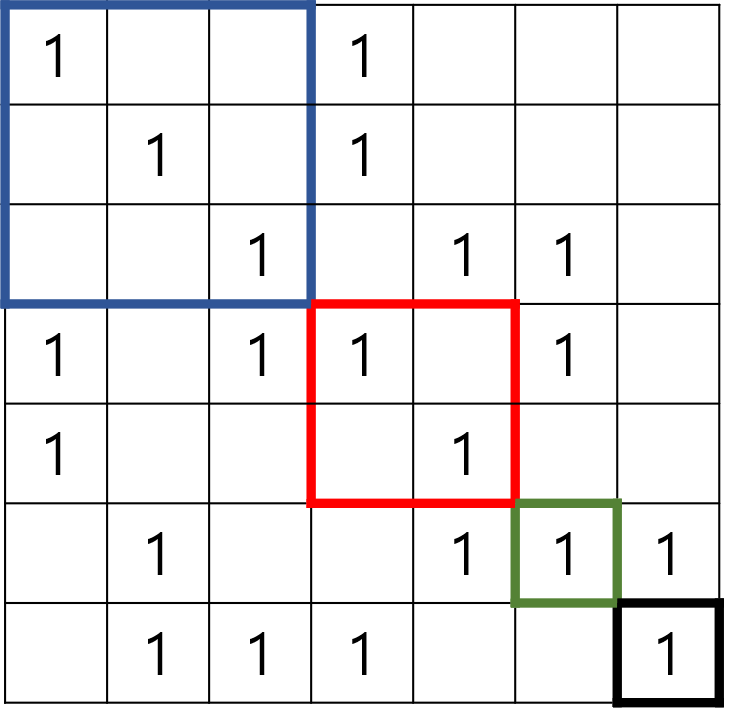}}
\hspace{20pt}
\subfigure[MTM]{\includegraphics[width=0.25\linewidth]{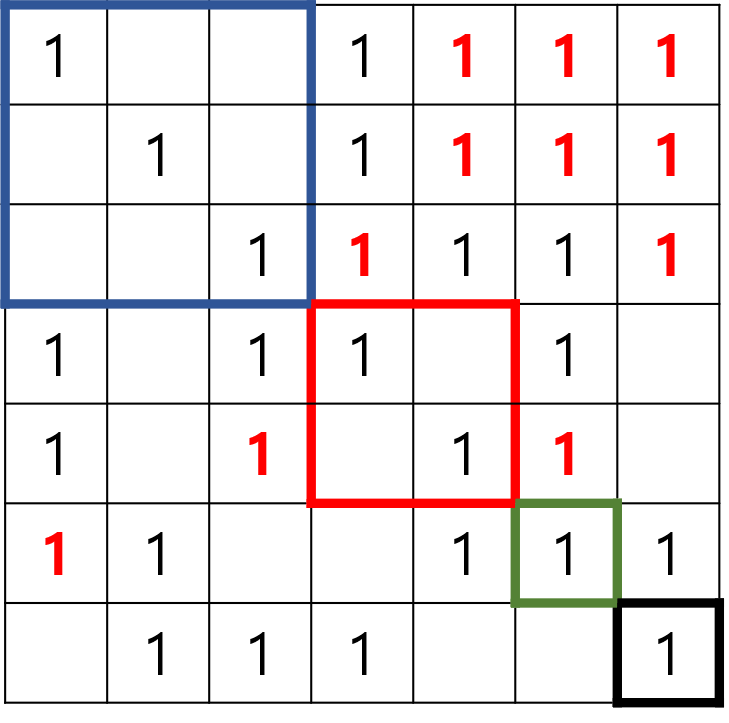}}

\caption{Topology matrices for 7-user interference channel achieving DoF $\frac{1}{3}$.}
\label{GMTM}
\end{figure*}

The following corollary is the matrix version of Corollary 3.

\begin{corollary}[MTM for DoF $1/n$]
 Suppose that there are $N$ alliance blocks in a topology matrix and messages in each alliance are ordered consecutively in indices. A topology matrix is MTM for DoF $1/(1+E_M)$ if all alliance blocks satisfy following conditions:

 \begin{enumerate}
\item Each column has $E_M$ interference blocks from $E_M$ alliances.
\item There exist at least one interference blocks between any two alliance blocks.
 \end{enumerate}
\end{corollary}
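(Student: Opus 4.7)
My plan is to reduce Corollary 4 to Corollary 3 by translating each of the two matrix-level hypotheses into the corresponding graph-level condition on generalized sub-alliances. Once that dictionary is in place, the fact that a topology of DoF $1/(E_M+1)$ is maximal will follow from the proof of Corollary 3 without any new work, so my job is essentially to make the bookkeeping precise.

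First, I would fix the setup: since messages in each alliance occupy consecutive indices, for any column $j$ with $W_j \in \mathcal{A}_n$ the ones in that column outside the block of $\mathcal{A}_n$ are organized into interference blocks, each interference block corresponding to one alliance $\mathcal{A}_m$ that contributes a conflict edge to every message of the sub-alliance of $\mathcal{A}_n$ containing $W_j$. Thus, collecting all columns with a common set of interference blocks yields exactly a generalized sub-alliance $\mathcal{A}_{n,\mathcal{E}_n^k}$ in the sense of Definition 12, with $\mathcal{E}_n^k$ equal to the index set of the alliances producing those interference blocks. In particular, the number of interference blocks in column $j$ equals $|\mathcal{E}_n^k|$ for the unique $k$ with $W_j\in \mathcal{A}_{n,\mathcal{E}_n^k}$.

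Next, I apply hypothesis (i): if every column carries exactly $E_M$ interference blocks from $E_M$ distinct alliances, then $|\mathcal{E}_n^k|=E_M$ for every $n$ and every $k$, which is precisely the uniformity required in Corollary 3. Hypothesis (ii), the existence of at least one interference block between any two alliance blocks $\mathcal{A}_n$ and $\mathcal{A}_m$, translates to: either some column of $\mathcal{A}_m$ has an interference block from $\mathcal{A}_n$ (so $n\in \mathcal{E}_m^\alpha$ for some $\alpha$), or some column of $\mathcal{A}_n$ has one from $\mathcal{A}_m$ (so $m\in \mathcal{E}_n^\beta$ for some $\beta$). This is exactly multiple partial hostility between $\mathcal{A}_n$ and $\mathcal{A}_m$ as in Definition 13.

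With both graph-level hypotheses of Corollary 3 verified, I invoke Theorem 3 to conclude that the topology achieves $d_{\text{sym}}=1/(E_M+1)$, and Corollary 3 to conclude that the topology is maximal for this DoF value, i.e., adding any further interference link forces some column to acquire a new interference block from an alliance not already interfering with it, which pushes $|\mathcal{E}_i^k|$ up to $E_M+1$ and degrades the achievable DoF to $1/(E_M+2)$. The only subtlety I anticipate is handling the case where a newly added link falls inside an existing interference block for some other column of the same sub-alliance, since by the consecutive-index ordering the column must then still gain a fresh interfering alliance to preserve the block structure, and I would spell this out carefully; otherwise the argument is a direct matrix-to-graph translation and invocation of prior results.
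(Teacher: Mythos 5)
Your proposal is correct and follows exactly the route the paper intends: the paper states Corollary 4 as ``the matrix version of Corollary 3'' and omits the proof, and your matrix-to-graph dictionary (columns with $E_M$ interference blocks $\leftrightarrow$ $|\mathcal{E}_i^k|=E_M$ for all $i,k$; interference blocks between any two alliance blocks $\leftrightarrow$ multiple partial hostility) followed by an appeal to Theorem 3 and Corollary 3 is precisely that translation, spelled out in more detail than the paper itself provides.
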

 

\begin{example}
In Fig. \ref{GMTM}, there are two topology matrices for 7-user interference channel, which  have been already well permutated. Two matrices have four alliance blocks, respectively and both matrices can achieve symmetric DoF 1/3 in TIM, because $E_M$ is equal to 2. However, the topology matrix in Fig. 9 (a) is not MTM because there are lots of rooms for additional interference links. The topology matrix in Fig. 9 (b) is designed as an example of MTM from the topology matrix in Fig. 9 (a). The bold elements are inserted properly to satisfy the maximality of topology in Fig. 9 (b). After transformation, it can be seen that the topology matrix in Fig. 9 (b) satisfies two conditions in Corollary 4 and thus, it is an MTM for DoF 1/3.

\end{example}

\section{Conclusion}
In this paper, we introduced alliance as a set of messages that follows the deconflict of messages and the coopeartive conflict. Based on alliance, we proposed the alliance construction, which constructs and relates alliances with mutually partial hostility and generates maximal topology. Properties of alliance construction were given and the discriminant and transformation for maximal topology were also proposed. Moreover, we convert alliance construction based on message graph into topology matrix in order to analyze the maximality of topology easily. The sufficient and necessary conditions for MTM was delivered and the discriminant of MTM and the transformation of non-MTM into MTM were also proposed. Furthermore, we generalized the alliance construction with generalized sub-alliances dealing with interference channels for DoF 1/n. The generalized alliance construction was represented in matrix form and the conditions of MTM for DoF $\frac{1}{n}$ were described.

\end{document}